\long\def\comment#1{}
\newfont{\bbb}{msbm10 scaled 700}
\newfont{\bb}{msbm10 scaled 1100}
\newcommand{\mbs}[1]{\bm{#1}}
\newcommand{\mat}[1]{{\uppercase{\mbs{#1}}}}
\renewcommand{\Re}[1][]{\ifthenelse{\isempty{#1}}{\operatorname{Re}}{\operatorname{Re}\left(#1\right)}}
\renewcommand{\Im}[1][]{\ifthenelse{\isempty{#1}}{\operatorname{Im}}{\operatorname{Im}\left(#1\right)}}
\newcommand{\Tm}{\mat{t}}
\newcommand{\CN}[1][]{\ifthenelse{\isempty{#1}}{\mathcal{N}_{\mathbb{C}}}{\mathcal{N}_{\mathbb{C}}\left(#1\right)}}
\renewcommand{\P}[1][]{\ifthenelse{\isempty{#1}}{\mathbb{P}}{\mathbb{P}\left(#1\right)}}
\newcommand{\E}[1][]{\ifthenelse{\isempty{#1}}{\mathbb{E}}{\mathbb{E}\left(#1\right)}}
\renewcommand{\det}[1][]{\ifthenelse{\isempty{#1}}{\mathrm{det}}{\mathrm{det}\left(#1\right)}}
\newcommand{\trace}[1][]{\ifthenelse{\isempty{#1}}{\mathrm{tr}}{\mathrm{tr}\left(#1\right)}}
\newcommand{\rank}[1][]{\ifthenelse{\isempty{#1}}{\mathrm{rank}}{\mathrm{rank}\left(#1\right)}}
\newcommand{\diag}[1][]{\ifthenelse{\isempty{#1}}{\mathrm{diag}}{\mathrm{diag}\left(#1\right)}}
\newcommand{\blkdiag}[1][]{\ifthenelse{\isempty{#1}}{\mathrm{blkdiag}}{\mathrm{blkdiag}\left(#1\right)}}
\renewcommand{\Re}{{\rm Re}}
\renewcommand{\Im}{{\rm Im}}
\newcommand{\st}{{\rm s.t.}}
\DeclareMathAlphabet{\mathcal}{OMS}{cmsy}{m}{n}
\newcommand{\eqdef}{\triangleq}
\newtheorem{remark}{Remark}%[section]
\newtheorem{theorem}{Theorem}%[section]
\newtheorem{corollary}{Corollary}%[section]
\newtheorem{lemma}{Lemma}%[section]
\long\def\longdelete#1{}
\begin{document}
\title{Topological Interference Management with Adversarial Topology Perturbation: \\An Algorithmic Perspective}

\author{
\IEEEauthorblockN{Ya-Chun Liang, Chung-Shou Liao, and Xinping Yi
% \IEEEauthorrefmark{1}\IEEEauthorrefmark{2}, Chung-Shou Liao\IEEEauthorrefmark{1}, Xinping Yi\IEEEauthorrefmark{2}
}

\thanks{This paper has been presented in part at IEEE International Symposium on Information Theory, Melbourne, Australia, 2021~\cite{9518022}.} 
\thanks{
%\IEEEauthorrefmark{1}
Y. Liang is with Department of Industrial Engineering and Engineering Management, National Tsing Hua University, Hsinchu 30013, Taiwan, ROC, and is with Department of Electrical Engineering \& Electronics, University of Liverpool, Liverpool L69 3GJ, England, UK. Email: ycliang512@gapp.nthu.edu.tw, ya-chun.liang@liverpool.ac.uk.}
\thanks{C. Liao is with Department of Industrial Engineering and Engineering Management, National Tsing Hua University, Hsinchu 30013, Taiwan, ROC. Email: csliao@ie.nthu.edu.tw.}
% \IEEEauthorrefmark{2}
\thanks{X. Yi is with Department of Electrical Engineering \& Electronics, University of Liverpool, Liverpool L69 3GJ, England, UK. Email: xinping.yi@liverpool.ac.uk.}}

\maketitle

\begin{abstract}
In this paper, we consider the topological interference management (TIM) problem in a dynamic setting, where an adversary perturbs network topology to prevent the exploitation of sophisticated coding opportunities (e.g., interference alignment). Focusing on a special class of network topology -- chordal networks -- we investigate algorithmic aspects of the TIM problem under adversarial topology perturbation. In particular, given the adversarial perturbation with respect to edge insertion/deletion, we propose a dynamic graph coloring algorithm that allows for a \emph{constant} number of re-coloring updates against each inserted/deleted edge to achieve the information-theoretic optimality. This is a sharp reduction of the general graph re-coloring, whose optimal number of updates scales as the size of the network, thanks to the delicate exploitation of the structural properties of {chordal graph classes}. 
\end{abstract}

\begin{IEEEkeywords}
Topological Interference Management (TIM), Weakly Chordal Graph, Chordal Graph, Adversarial Perturbation Model, Dynamic Coloring
\end{IEEEkeywords}

%=======================================================

\section{Introduction}

Due to the broadcast nature of wireless communications, interference between multiple concurrent information flows is a major obstacle limiting multi-user network capacity, which is even more severe in massive Internet of Things (IoT) networking. Interference management (IM) is, therefore, vital for reliable and high-rate communications. While conventional IM techniques in emerging 5G standards (e.g., LTE-M) promise some guarantees, a major obstacle is the difficulty of acquiring timely and accurate channel state information (CSI) at the transmitters, especially in large networks with high-mobility IoT devices. 

To resolve this issue, one of the most promising techniques is topological interference management (TIM) \cite{jafar2013topological}, in which the CSI requirement is considerably reduced. It only requires knowledge of network connectivity patterns, i.e., a graph with edges indicating whether or not communication between two vertices is feasible. 
% Remarkably, with only such topological information, reliable communication with significant spectral efficiency gains is still feasible \cite{jafar2013topological}. %The state-of-the-art TIM focuses on static networks, where connectivity patterns are unchanged over time.
% 
% 
To be specific, the TIM problem \cite{jafar2013topological} is interested in the characterization of the degrees of freedom (DoF) region of partially-connected wireless networks with no channel state information at the transmitters (no CSIT) beyond the network topology. 
With no CSIT, it seems the transmitters are restricted to conventional multiple access techniques, e.g., time-division multiple access (TDMA), frequency reuse, and code-division multiple access (CDMA).
Remarkably, it has been evidenced in \cite{jafar2013topological} that, given only the pure knowledge of the network topology (i.e., partial connectivity graph) but not the exact channel realizations, advanced techniques such as interference alignment are able to exploit coding opportunities and achieve substantial improvement over TDMA, frequency reuse, and CDMA. 

Since then, TIM has attracted extensive attention, and the list of follow-up works is growing, including multi-level TIM \cite{geng2013multilevel}, multi-antenna TIM \cite{sun2014topological}, TIM with alternating topology \cite{sun2013topological, gherekhloo2013topological}, TIM with transmitter/receiver cooperation \cite{yi2015topological,yi2018topological}, TDMA TIM \cite{maleki2013optimality,yi2018tdma}, TIM with topology uncertainty \cite{OPTIM}, and TIM with confidential messages \cite{mutangana2020topological}, and many others (e.g., \cite{naderializadeh2014interference,gao2014topological,shi2016low,aquilina2016degrees,yang2017topological,el2017topological,davoodi2018network,doumiati2019framework}). The state-of-the-art TIM focuses on the {\em static} setting, where the network topology is kept unchanged during communication. However, as the applications of robotic and vehicular communications emerge, it is 
unlikely to maintain the network topology unchanged.
%
% While the majority of TIM variants consider {\em static} network topology, 

To deal with the dynamicity introduced by physical environment, we revisit the TIM problem with a dynamic setting.
It is a challenging problem for TIM under a dynamic setting, where the network topology may be time-varying.
Let us take the high-mobility scenarios for example, where the vehicles increasingly take part in IoT networking, so that the network connectivity pattern may change over time. 
The conventional wisdom is to re-design coding schemes once network topology is changed, because different coding techniques may be required to achieve the information-theoretic optimality. However, coding re-design is time-consuming, especially for large-scale networks.
For the sake of tractability, we take a first step to look at an adversarial perturbation model, where there exists an adversary who changes the network topology slightly, by edge insertion or deletion, to prevent the exploitation of coding opportunities.

%
%In contrast, \cite{Jafar:1D} identifies a class of physically motivated one-dimensional convexly connected cellular topology  where the simple orthogonal access scheme is sum DoF optimal. A one-dimensional network topology corresponds to a placement of all transmitters and receivers along a straight line. Convex connectivity captures the physical phenomenon that signals are stronger between nodes that are physically closer to each other than between nodes that are farther away. 
Let us take the network depicted in Figure \ref{fig:convex} as an example.
%as Source/base station $S_1$ is heard by Destination/user $D_4$, it must also be heard by destinations that are closer, e.g., Destinations $D_1, D_2, D_3$. This means the coverage region of each source is convex, referred to as source signal convexity.
%Such a signal convexity property is satisfied by the destinations as well, i.e., the sources that are heard by a destination form a convex region. We also require the message sets are convex. For example, consider Figure \ref{fig:convex}, when Source $1$ has a desired message for Destination $D_2$, it must has a desired message for Destination $D_1$ as $D_1$ is closer to $S_1$ than $D_2$. Such a message convexity property is satisfied by both the sources and destinations. Under such signal and message convexity assumption at both sources and destinations, 
For this network topology, recognized as a convex network \cite{maleki2013optimality}, it has been shown that TDMA is information-theoretically optimal in terms of sum DoF. In particular, a greedy message scheduling scheme that activates pairs $S_1 \rightarrow D_1, S_3 \rightarrow D_5, S_5 \rightarrow D_7, S_6 \rightarrow D_{10}$ and $S_8 \rightarrow D_{14}$ from left to right, achieves the optimal sum DoF value of 5.
%
% \vspace{-9pt}
\begin{figure}[htb]
 \centering
\includegraphics[width=0.8\columnwidth]{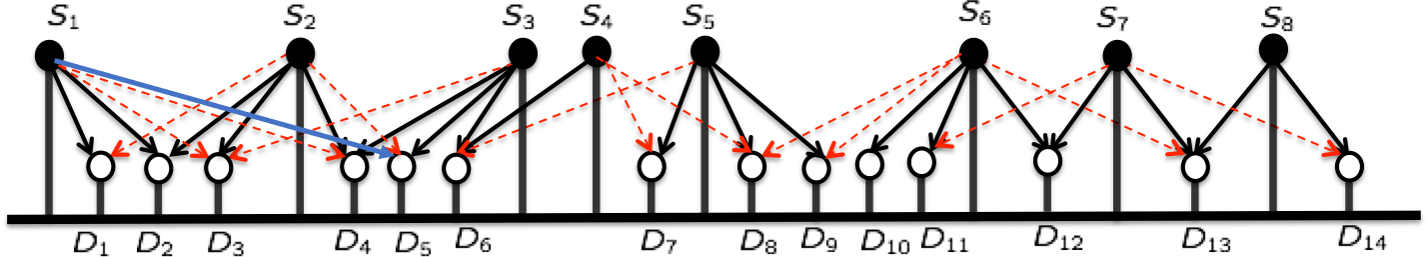}
% \vspace{-3pt}
\caption{A one-dimensional convex network with partial connectivity \cite{maleki2013optimality}. The solid black edges indicate independent desired messages, the dashed red edges are interference, and the solid blue edge is the adversarial topology perturbation with edge insertion as an additional interfering link. %The network is convex, for both signals and messages, sources and destinations.
}
\label{fig:convex}
\end{figure}
% \vspace{-7pt}
%
With a slight change of the network topology, e.g., edge insertion of
% when $S_1$ comes closer to $D_5$ which adds the connectivity 
$S_1 \rightarrow D_5$ as an additional interfering link, it appears we cannot schedule $S_3 \rightarrow D_5$ after the activation of $S_1 \rightarrow D_1$. Now that the original message scheduling is conflicting due to edge insertion, 
shall we restart the scheduling from $S_1 \rightarrow D_1$, followed by $S_3 \rightarrow D_6$, and so on?
% how shall we update the scheduled messages to adapt such change? 

% Certainly, as the insertion of edge between $S_1$ and $S_5$ does not change the structure of the network topology, we can still use the greedy scheduling algorithm to start from the very beginning. 

In general, given the adversarial network perturbation, two questions then arise as to (1) whether or not we should re-schedule all messages to adapt such a slight change of topology; and (2) whether or not such re-scheduling still yields the optimal solution. 
% The first issue is even severer for a massive network -- re-scheduling messages is time-consuming. The second issue depends if the slight change maintains the topological structure, so that message scheduling is still optimal. 
%
To deal with these issues, in this paper we work towards the TIM problem in a dynamic setting by starting with a special family of network topologies -- chordal networks \cite{yi2018tdma} -- for which message scheduling (i.e., TDMA) is information-theoretically optimal. The fundamental structural property that makes other sophisticated coding schemes unnecessary is the {chordality}.
{%\color{gray}
Precisely, 
a network has the chordal property, or it is called a \emph{chordal network}, if the network topology does not contain chordless cycles with length greater than four \cite{yi2018tdma}. 
% involving more than two sources and two destinations.
Note that the network topology is a bipartite graph with sources and destinations being two sets of vertices such that any cycle involves the same number of sources and destinations. Therefore, there only exist cycles with length four, six, eight, etc, as length two cycles reduce to an edge. Moreover, the chordless cycles do not contain any chord in the cycles, and are deemed as the shortest cycles within those nodes.
}

As a starting point, an adversary attempts to change network topology by edge insertion/deletion with the {chordal property} maintained, so that no advanced coding opportunities can be exploited \cite{yi2018tdma}. Under this setting, we propose a dynamic graph coloring algorithm on a message conflict graph to robustify the message scheduling scheme against such perturbation. 
{%\color{gray}
Specifically, a message conflict graph is an undirected graph where a vertex represents a message from a source to a destination in the network topology graph, and an edge exists between two vertices if and only if the two messages conflict with each other when transmitting simultaneously. That is, two conflicting messages are either sent from the same source, received by the same destination, or the transmission of one message interferes that of the other one over the same time/frequency/space resource.
} 
Under such adversarial setting in chordal networks, we prove that the proposed algorithm maintains the information-theoretic optimality under the TIM setting with a constant number of re-coloring updates regardless of the sizes of the network or message set. Leveraging graph theoretic tools, the proof technique deliberately exploits the structural properties of {chordal graph classes}. 

{%\color{gray}
The rest of this paper is organized as follows. In the next section, we present the modeling and problem formulation of the dynamic TIM. In Section III, we summarize the main results including the information-theoretic optimality and the updating complexity of the proposed algorithms. The graph models and algorithms are detailed in Section IV, and detailed proofs of the main results can be found in Section V, followed by the Conclusion in Section VI.
}

\section{Model and Problem Formulation}
\subsection{Topological Interference Management (TIM)}
The general TIM problem considers a \emph{partially-connected} wireless interference network with $M$ sources (transmitters), labeled as $S_1$, $S_2$, $\dots$, $S_M$, and $N$ destinations (receivers), labeled as $D_1$, $D_2$, $\dots$, $D_N$, each equipped with one single antenna.
The received signal for $D_j$ at time instant $t$ is given by
\begin{align}
Y_j(t) = \sum_{i=1}^M t_{ji} h_{ji}(t) X_i(t) + Z_j(t)
\end{align}
where $X_i(t)$ is the transmitted signal from $S_i$, and $Z_j(t)$ is the additive Gaussian noise with zero-mean and unit-variance, $h_{ji}(t)$ is the channel coefficient between $S_i$ and $D_j$, and $t_{ji}=1$ if $S_i$ and $D_j$ are connected, and 0 otherwise. The network topology (i.e., partial connectivity patterns) can be represented by a binary matrix $\Tm=[t_{ji}]_{N \times M}$, which is assumed to be known at all sources and destinations.
The TIM problem under the {\em static} setting \cite{jafar2013topological} assumes the actual channel realizations $\{h_{ji}(t), \forall i,j,t\}$ are not available at the sources, and the network topology $\Tm$ is fixed during communication.

As to the message set, we follow the general multiple unicast model in \cite{jafar2013topological,yi2018tdma}, where each message originates from one unique source and intends for one unique destination. All messages are independent. As such, each source may have multiple independent messages that could intend for multiple destinations and each destination may desire multiple independent messages that could originate from multiple sources.

Under the TIM setting, the commonly used figures of merit include symmetric degrees of freedom (DoF), sum DoF, and DoF region, which follow the standard definition in the literature (e.g., \cite{jafar2013topological,yi2018tdma}). Roughly speaking, the DoF value is the maximum number of interference-free sub-streams that can be transmitted at high signal-to-noise ratio (SNR). Among many promising DoF results for TIM,
% with respect to different network topology, 
the DoF characterization in \cite{maleki2013optimality,yi2018tdma} reveals the limitation of coding in chordal networks, which
is of particular relevance to this work.

\subsection{TIM in Chordal Networks}

When treating sources and destinations as vertices, and their connectivity as edges, the partially-connected network is a bipartite graph (also referred to as network topology graph), and the topology matrix $\Tm$ is the adjacency matrix.

{%\color{gray}
Chordal networks are a family of network topology recognized as chordal bipartite graphs, which is a class of bipartite graphs such that there either exist no cycles or no chordless cycles with length larger than four. A {\em cycle} is a closed loop formed by a set of vertices and edges. 
The number of vertices in the cycle is referred to as the {\em length} of the cycle.  
A {\em chordless cycle} is a cycle with no edges between any non-consecutive vertices.
}

It has been proved in \cite{yi2018tdma} that TDMA is the information-theoretically optimal coding scheme in chordal networks, and achieves all-unicast DoF region.
TDMA can be interpreted as a general message scheduling scheme such that any interfering messages are not transmitted simultaneously. That is, any two conflicting messages that originate from the same source, intended for the same destination, or one message from a source interferes the other message's destination, will be scheduled at two distinct time slots.
TDMA on network topology graphs can be alternatively done by vertex coloring on the message conflict graph constructed from the network topology graph to capture the conflict between messages.

\subsection{Vertex Coloring for TIM}
\label{sec:graph_def}
To describe vertex coloring, we first introduce some definitions in graph theory.
\subsubsection{Graph Definitions}

Given a graph $G=(V,E)$,
a {clique} is a subgraph of $G$ with every two vertices connected with an edge. 
The {clique number} is the maximum possible size of the cliques in $G$.
The maximum clique problem is then defined to solve the problem for finding a largest possible number of vertices that form a clique in a given graph.

Vertex coloring is to assign colors to vertices of a graph so that the adjacent vertices receive different colors. Let us define a mapping for vertex coloring $f_G: V(G) \mapsto \{1,2,\dots,c\}$ such that $f_G(v_i) \neq f_G(v_j)$ if $(v_i,v_j) \in E(G)$, where $c$ is an integer. The chromatic number is the minimum $c$ for all valid vertex color assignments.
% The {\em chromatic number} of a graph $G$ is a set of vertices such that any two vertices are not adjacent. The {\em independent set number} $\alpha(G)$ is the cardinality of the largest independent set.
% A coloring of a graph is an assignment 
%% problem 
% that labels the vertices with colors, 
% such that no two vertices sharing the same edge have the same color.
The minimum coloring problem is formulated to find a
valid vertex color assignment such that the required number of colors is minimized.
% partition of the vertices into the smallest number of 
% %% stable 
% label 
% sets.
It is an NP-hard problem on a general graph.

A graph is perfect if and only if for every induced subgraph the clique number is equal to the chromatic number \cite{golumbic2004algorithmic}, where the induced subgraph refers to a subgraph of an original graph with a subset of vertices such that all edges associated to the subset of vertices are kept.
{%\color{gray}
A graph is {\em chordal} (resp. {\em weakly chordal}) if there is no induced subgraph with chordless cycle of length greater than three (resp. four), i.e., every cycle with length greater than three (resp. four) has a chord.
Both chordal and weakly chordal graphs are a subclass of perfect graphs.}
% $G[S]$ with $S \subseteq V$, $\chi(G[S]) = \omega(G[S])$. %If $G$ is perfect, then $\chi(G)=\chi_f(G)$ \cite{PerfectGraph}.
% 
\begin{figure}[htb]
\centering
\includegraphics[scale=0.6]{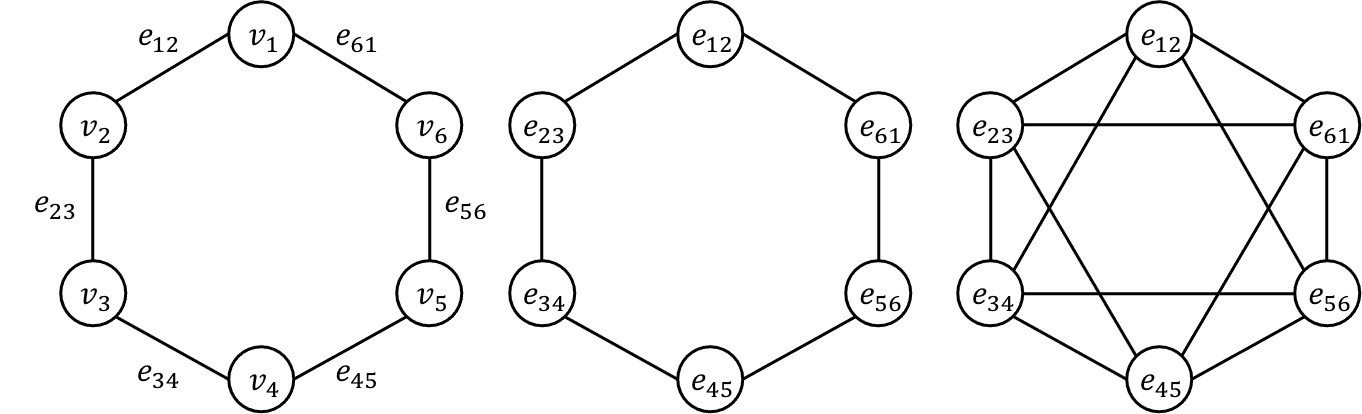}
\caption{(Left) Graph $T$; (Middle) Line graph $T_e$; (Right) Square of line graph $T_e^2$.}
\label{fig:graph_line}
\end{figure}
 
Given a graph $T$,
the {\em line graph} of $T$ is another graph, denoted by $T_e$, such that each vertex of $T_e$ represents an edge of $T$, and any two vertices in $T_e$ are adjacent if and only if their corresponding edges in $T$ have an endpoint in common (as shown in the middle of Fig.~\ref{fig:graph_line}).
The {\em square} of a graph $T_e$, denoted by $T_e^2$, is another graph with the same vertex set as $T_e$, but any two vertices that are adjacent to a common vertex in $T_e$ are also incident to an edge in $T_e^2$ (as shown in the right of Fig.~\ref{fig:graph_line}).
% 
%A subgraph of $\Gc$ containing a subset of vertices $\Sc$ $(\Sc \subseteq \Vc)$ is said to be an {\em induced subgraph}, denoted by $\Gc[\Sc]$, if for any pair of vertices $u$ and $v$ in $\Sc$, $uv$ is an edge of $\Gc[\Sc]$ if and only if $uv$ is an edge of $\Gc$.

\subsubsection{Vertex Coloring on Conflict Graphs}
For the TIM setting with a specified message set, the {\em message conflict graph} is defined to be a graph with vertices 
which represent messages and edges between two vertices if the two messages have a conflict with each other. 
Two messages have a conflict if 
(1) they originate from the same source, 
(2) they intend for the same destination, 
or (3) one source interferes the other destination.

{%\color{gray}
It is shown in \cite[Lemma 1]{yi2018tdma} that, for any network topology graph $T$, the square of its line graph, $T_e^2$, is its message conflict graph.
Note that if $T$ is weakly chordal, then $T_e^2$ is also weakly chordal \cite{cameron2003finding}.
For chordal networks, its network topology $T$ which is a chordal bipartite graph is a weakly chordal graph, so that the message conflict graph $T_e^2$ is also weakly chordal.
More specifically, if the network topology $T$ is a biconvex bipartite graph (corresponding to a convex network \cite{maleki2013optimality}), then the message conflict graph $T_e^2$ is a chordal graph. 
% Formally speaking,  
% a graph is {\em chordal} (or {\em triangulated}) if there is no induced subgraph with chordless cycle of length greater than three, i.e., every cycle with length greater than three has a chord (as shown in Fig.~\ref{fig:graph_trans}).
}

\begin{figure}[htb]
\centering
\includegraphics[scale=0.6]{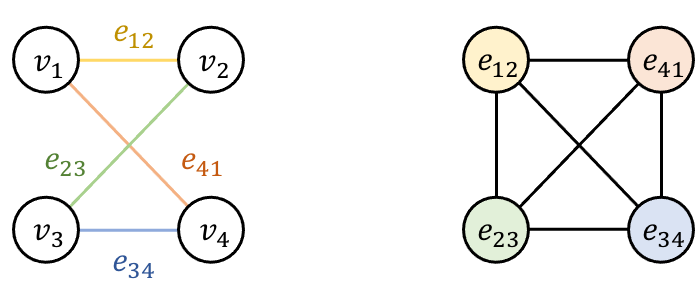}
\caption{(Left) Network topology graph $T$; (Right) Message conflict graph $T_e^2$.}
\label{fig:graph_trans}
\end{figure}

Given such graph properties, \cite{yi2018tdma} shows that vertex coloring on $T_e^2$, equivalently TDMA on $T$, is the information-theoretically optimal coding scheme for the all-unicast TIM problem in chordal networks. 
{%\color{gray} 
As shown in Fig.~\ref{fig:graph_trans}, the clique number of the message conflict graph, as well as the chromatic number due to perfect graph,
% of colors used in it, 
represents the maximum number of messages that cannot be transmitted simultaneously in the network topology graph. That is, the messages should be scheduled at four distinct time slots for interference management.
}

In what follows, our focus will be placed on vertex coloring on the message conflict graph $T_e^2$. For notational simplicity, let us denote hereafter by $G \eqdef T_e^2$ as the message conflict graph. % and $H \gets G_e^2+(u,v)$.

% 
%The {\em fractional chromatic number} $\chi_f(\Gc)$ is the minimum ratio of $n$ over $m$ for all proper color assignments, where each vertex in $\Gc$ can be assigned a set of $m$ colors in which the colors are drawn from a palette of $n$ colors, such that any adjacent vertices have no colors in common.  
% 
%The {\em chromatic number} $\chi(\Gc)$ is a special case of $\chi_f(\Gc)$ when $m=1$.
% 

\subsection{Adversarial Perturbation Model}

In this work, we consider the dynamic TIM problem in chordal networks with an adversarial perturbation model. The network topology is perturbed by an adversary with edge insertion or deletion, which corresponds to inserted or deleted interfering links. We assume the message demanding patterns are kept unchanged, so that the adversarial topology perturbation leads only to edge insertion/deletion on the message conflict graph $G$ due to added/removed interference, but not to node insertion/deletion in $G$. 
% To make the problem more tractable, 
We assume the {chordal property} is still maintained in $G$ after edge insertion/deletion, inspired by the adversary models widely used in the distributed computation literature 
(e.g., \cite{kuhn2010distributed}, \cite{eppstein1999dynamic}) and the goal of the adversary to prevent potential coding gains beyond message scheduling (e.g., TDMA).

%{\color{red} Ya-Chun, please add some other references, e.g., edge insertion and deletion models.}). 

{\bf Edge Insertion}: Given the original {message conflict graph $G$}, an inserted edge $(u,v)$ yields a new graph $H=G+(u,v)$,
% which is still {\color{red}weakly chordal}.
{where the chordality is retained.}

{\bf Edge Deletion}: Given the original {message conflict graph $H$}, a deleted edge $(u,v)$ yields a new graph $G=H-(u,v)$,
% which is still {\color{red}weakly chordal}.
{where the chordality is retained.}

% Specifically, we consider the TIM problem in chordal networks with conflict graph $G$, which is weakly chordal. An adversarial perturbation results in edge insertion in $G$, and the perturbed graph $H=G+(x,y)$ is still weakly chordal. 
% {\color{gray}
Specifically, the retained chordality is in the sense that if $G$ is chordal (resp. weakly chordal), then $H$ is also chordal (resp. weakly chordal), and vice versa.
% }
Note that $(u,v) \notin E(G)$ and $(u,v) \in E(H)$. 
% {\color{red}
The goal is to robustify the coding schemes with minimal updating cost.
% \textbf{This should be formally defined and justified.}
% }
From an algorithmic perspective, we aim to design robust coloring functions $f_H$ for edge insertion and $f_G$ for edge deletion with minimal re-coloring updates, i.e.,
\begin{align*}
    \min \quad &  \mathrm{diff} (f_G, f_H)\\
    \st \quad &  H=G+(u,v),\\
    & f_G, f_H \text{ are information-theoretically optimal,}
\end{align*}
where $\mathrm{diff} (f_G, f_H)$ is the difference of color assignments of $G$ and $H$ with edge insertion/deletion.

% {
% \color{red} Ya-Chun, there is still a missing block there. How many edges will be added or deleted on the message conflict graphs when only one edge in the topology graph is added or deleted? Is there an upper bound?
% }

% We consider a worst-case dynamic network model that is commonly used in distributed computing \cite{kuhn2010distributed}, where the messages and the network topology of each channel use is chosen by an adversary, and the transmitters do not have such priori knowledge before they transmit their messages.

% Performance of metrics:

% Communication complexity:

% Computational complexity:

% Updating complexity:

\section{Main Results}

{%\color{gray}
The main results are a set of dynamic graph re-coloring algorithms on the message conflict graph against adversarial perturbation of edge insertion/deletion, which can be straightforwardly translated into message scheduling schemes on the network topology graph to robustify TIM coding schemes in chordal networks for both weakly chordal (i.e., Algorithms \ref{alg:weakly_insertion} and \ref{alg:weakly_deletion}) and chordal graphs (i.e., Algorithms \ref{alg:chordal_insertion} and \ref{alg:chordal_deletion}).
}
% {\color{red} Ya-Chun, no edge deletion algorithms for chordal graphs??}
In what follows, we present its information-theoretic optimality and updating complexity for edge insertion (i.e., Algorithms \ref{alg:weakly_insertion} and \ref{alg:chordal_insertion}), and delegate the proofs to Section \ref{sec:proof}.
{%\color{blue}
For edge deletion (i.e., Algorithms \ref{alg:weakly_deletion} and \ref{alg:chordal_deletion}), the similar results apply.
% Ya-Chun, we need to say something on edge deletion for both optimality and complexity.
}
{
\begin{theorem}[Optimality]
\label{thm:optimality}
For the TIM setting with adversarial perturbation, Algorithms~\ref{alg:weakly_insertion} and~\ref{alg:chordal_insertion} achieve the information-theoretic optimality in chordal networks with conflict graphs being weakly chordal and chordal, respectively, when the chordal property is maintained with edge insertion.
\end{theorem}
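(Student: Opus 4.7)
The plan is to reduce the information-theoretic optimality of Algorithm~\ref{alg:dynamic} to the purely combinatorial statement that the algorithm outputs a valid vertex coloring of the updated conflict graph $H = G + (u,v)$ with exactly $\chi(H)$ colors. Granted this combinatorial claim, optimality follows immediately: by \cite{yi2018tdma}, vertex coloring on the message conflict graph is information-theoretically optimal for all-unicast TIM on chordal networks, and since $H$ is assumed to remain weakly chordal under the adversarial insertion, the same coloring-based optimality characterization continues to apply after the perturbation.

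I would begin by invoking perfection of weakly chordal graphs \cite{hayward1985weakly}, which yields $\chi(G) = \omega(G)$ and $\chi(H) = \omega(H)$. Because $H$ differs from $G$ by a single edge $(u,v)$, elementary clique arithmetic gives $\omega(H) \in \{\omega(G), \omega(G)+1\}$, with the larger value occurring precisely when $u$ and $v$ belong to a common clique of size $\omega(G)-1$ in $G$. This dichotomy drives the whole analysis: either the existing palette of $\omega(G)$ colors already suffices on $H$ and only a local recoloring is needed to resolve the conflict at $(u,v)$, or exactly one additional color is information-theoretically unavoidable and Algorithm~\ref{alg:dynamic} must, and does, introduce a single fresh color.

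The main technical step will be a structural lemma asserting that Algorithm~\ref{alg:dynamic}'s local updates always terminate with a valid coloring on the minimum palette of size $\omega(H)$. I would prove this by tracking how the algorithm propagates the conflict $f_G(u) = f_G(v)$ through a short recoloring chain confined to $N_H(u) \cup N_H(v)$, and arguing that whenever the chain cannot be closed using the existing colors, the endpoints together with the chain vertices certify an $(\omega(G)+1)$-clique in $H$ (so a new color is genuinely necessary), while in the complementary case the absence of induced holes and antiholes in $H$ guarantees that a short Kempe-chain-style exchange resolves the conflict. The hard part will be precisely this second alternative: showing that Algorithm~\ref{alg:dynamic}'s local search cannot be trapped in a configuration requiring $\omega(H)+1$ colors. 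I expect to establish this by contradiction, demonstrating that any obstruction to completing the recoloring locally would force an induced $C_k$ or its complement with $k \ge 5$ spanning the chain vertices, contradicting weak chordality of $H$ and hence the adversarial model's own invariant.

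Finally, combining the structural lemma with the perfection identity $\chi(H) = \omega(H)$ shows that the coloring output by Algorithm~\ref{alg:dynamic} attains the chromatic number of $H$. Since $H$ is the message conflict graph (i.e., $G_e^2$ after the perturbation) of a network whose topology still satisfies the hypotheses of \cite[Lemma 1]{yi2018tdma}, this minimum coloring corresponds to an optimal TDMA schedule whose symmetric DoF equals $1/\chi(H)$ and which meets the information-theoretic outer bound. This yields Theorem~\ref{thm:optimality}.
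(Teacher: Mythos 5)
Your overall reduction---``the algorithm outputs a proper coloring of $H$ with $\chi(H)$ colors, hence by \cite{yi2018tdma} and the preserved weak chordality the TDMA schedule is information-theoretically optimal''---matches the second half of the paper's argument exactly, and your clique dichotomy $\omega(H)\in\{\omega(G),\omega(G)+1\}$ together with perfection is consistent with how the algorithm's cases are organized. The divergence, and the problem, is in how you certify that the algorithm actually attains $\chi(H)$. The paper does not argue via Kempe-chain exchanges at all: it observes that Algorithm~\ref{alg:dynamic} produces an updated sequence of two-pair contractions for $H$ that is itself a possible execution of Algorithm~\ref{alg:static} on $H$, and then invokes the \emph{Correctness} theorem of \cite{hayward1989optimizing}, which guarantees that contracting two-pairs in an \emph{arbitrary} order in a weakly chordal graph terminates in a clique and yields a maximum clique and minimum coloring. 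That single citation discharges the entire combinatorial burden.

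Your proposed substitute---a structural lemma that any obstruction to closing a short recoloring chain inside $N_H(u)\cup N_H(v)$ either certifies an $(\omega(G)+1)$-clique or forces an induced hole/antihole in $H$---is precisely the hard part, and you only say you ``expect to establish'' it. As stated it is doubtful: perfection gives the \emph{existence} of an $\omega(H)$-coloring of $H$, but it does not imply that such a coloring is reachable from $f_G$ by exchanges confined to the neighborhoods of $u$ and $v$; in general weakly chordal graphs a repair may need to propagate far from the inserted edge, and ruling that out is not a weak-chordality fact but (in this paper) a consequence of the much stronger $(4,4,2)$-intersection-representation property of message conflict graphs, which the paper reserves for the complexity claim in Theorem~\ref{thm:complexity}. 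So you have fused the optimality and locality claims, and the fused lemma is unproven. To close the gap you should either (i) prove your hole/antihole-obstruction lemma in full, which I suspect requires essentially reconstructing the two-pair machinery, or (ii) follow the paper's route and argue that the updated contraction sequence is a valid run of the static two-pair algorithm on $H$, letting the arbitrary-order correctness theorem of \cite{hayward1989optimizing} do the work.
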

}

% {\color{gray}
The information-theoretic optimality is based on the facts that (1) the chordal property is maintained for both its network topology after edge insertion and the corresponding message conflict graph, such that vertex coloring on conflict graph (i.e., TDMA on network topology) achieves the optimal DoF \cite{yi2018tdma}; and (2) Algorithms~\ref{alg:weakly_insertion} and~\ref{alg:chordal_insertion} achieve optimal vertex coloring on conflict graph in the graph classes of weakly chordal and chordal graphs, respectively, when the chordal property is maintained with edge insertion.
The optimality for edge deletion simply applies, because these two facts still hold where Algorithms \ref{alg:weakly_deletion} and \ref{alg:chordal_deletion} achieve optimal vertex coloring after edge deletion when chordal property is maintained.
% }

% {\color{gray}
From the algorithmic perspective, the key technique that maintains the optimal coloring in Algorithm~\ref{alg:weakly_insertion} for weakly chordal graphs is due to the property %inspired by the concept
of \emph{two-pairs} introduced in \cite{hayward1989optimizing}.
% Based on the characteristics of 
Thanks to the property of two-pairs, % in weakly chordal graphs, 
the algorithm proposed in~\cite{hayward1989optimizing} can 
% be used to 
solve a variety of optimization problems in any static weakly chordal graph according to an arbitrary order of a sequence of two-pairs. 
% On the basis of the algorithm, 
Going beyond the static setting
% we design a dynamic algorithm for the above adversarial perturbation model in which edge insertion is allowed but without violating the property of weakly choral graphs. 
% To be specific, the dynamic algorithm only guarantees the optimality of coloring on weakly chordal graphs.
% Notably, 
% based on the merit of
and taking advantage of the arbitrary ordering, Algorithm~\ref{alg:weakly_insertion}
% the dynamic algorithm 
% is easy to follow and 
always achieves the optimal coloring as long as the weakly chordal property is maintained with edge insertion.
% 
% On the other hand, 
{%\color{red}
When it comes to chordal graphs,
the essential property %of chordal graphs used
that maintains the optimal coloring
in Algorithm~\ref{alg:chordal_insertion} is the perfect elimination order (PEO) introduced in~\cite{rose1970triangulated}.
}
% {\color{red}Ya-Chun, I cannot find the definition of PEO in [26], please double check the reference.}
% When considering 
To achieve the minimum vertex coloring for chordal graphs (in the static setting), 
one can greedily color all the vertices in the reverse order of the PEO~\cite{gavril1972algorithms}. 
% We then
{%\color{red}
Algorithm~\ref{alg:chordal_insertion} develops a dynamic operation adjusting both the perfect elimination order and the coloring locally for edge insertion, and thus achieves the minimum vertex coloring in chordal graphs.
}
% }

% \begin{remark}
% A special case on convex networks. \cite{maleki2013optimality}
% \end{remark}

% There are two main advantages of this dynamic algorithm.
% The first is that based on the merit of the arbitrary ordering, the dynamic algorithm is easy to proceed and always maintains the optimality. 
In addition to the optimality,
% The second is that 
the updating cost of dynamic operations can be upper bounded
{%\color{gray}
by a constant that does not scale as the sizes of the network and the message sets.
}
% Note that 
% TDMA on network topology graphs can be transformed to the graph coloring problem in message conflict graphs, where a message conflict graph is the square of the line graph of a network topology graph. 

{
\begin{theorem} [Updating complexity]
\label{thm:complexity}
Algorithms~\ref{alg:weakly_insertion} and~\ref{alg:chordal_insertion} have at most a constant number $c$ of re-coloring updates ($c\leq8$) against any single edge insertion to the message conflict graph.
\end{theorem}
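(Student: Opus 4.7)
The plan is to exploit the two-pair structure of weakly chordal graphs that underpins Algorithm~\ref{alg:dynamic}. I would split the analysis into cases based on the colors currently assigned to the endpoints $u$ and $v$ of the inserted edge. If $u$ and $v$ already carry distinct colors, the new edge $(u,v)$ trivially respects the coloring and zero re-coloring updates occur. All the interesting work lies in the complementary case, where $u$ and $v$ share a color.

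In that complementary case, the two-pair framework of \cite{hayward1989optimizing} implies that $(u,v)$ must have been processed as a two-pair at some step of the contraction sequence used to construct the coloring of $G$; the algorithm never assigns identical colors to two non-adjacent vertices unless they arise from such a contraction. Because the adversary maintains weak chordality of $H = G + (u,v)$, every chordless $u$-$v$ path in $G$ has length exactly two, and the common neighborhood of $u$ and $v$ is highly structured: in particular, any potential hole or antihole in $H$ is forbidden. I would use this to argue that the coloring conflict introduced by the new edge is strictly local to $u$ and $v$, so that at most one endpoint needs to be recolored.

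The core step is then to show that recoloring only one endpoint, say $v$, suffices. If $\chi(H) = \chi(G) + 1$, a fresh color is assigned to $v$ at a cost of one update. If $\chi(H) = \chi(G)$, the perfectness of weakly chordal graphs together with the arbitrary-order property of two-pair processing yields an existing color available in $v$'s neighborhood, again via a single swap. The arbitrary ordering inherited from \cite{hayward1989optimizing} lets the algorithm locally reorganize the contraction sequence around $(u,v)$ without perturbing colors elsewhere in the graph, keeping the total number of updates at a small constant independent of $|V(G)|$.

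The main obstacle will be ruling out cascading re-colorings in the case $\chi(H)=\chi(G)$, because in a general graph a single edge insertion can propagate color changes arbitrarily far through Kempe chains. Overcoming this requires a careful structural argument that weak chordality, combined with the two-pair property, forbids the long induced subgraphs that would be needed to sustain such a chain. Formalizing this locality, namely showing that propagation is confined to the two-pair to which $(u,v)$ corresponds, constitutes the technical heart of the proof.
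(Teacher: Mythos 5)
Your proposal identifies the right difficulty --- ruling out cascading re-colorings --- but it does not actually resolve it, and that resolution is the entire content of the theorem. You write that overcoming the Kempe-chain obstacle ``requires a careful structural argument that weak chordality, combined with the two-pair property, forbids the long induced subgraphs'' needed for propagation, but no such argument is supplied, and weak chordality alone is not enough to get a \emph{constant} bound: a vertex in a weakly chordal graph can have arbitrarily many neighbors, each of which may participate in a two-pair affected by the insertion. The paper closes this gap by using a property you never invoke: the message conflict graph is not an arbitrary weakly chordal graph but the square of the line graph of a chordal bipartite topology, and hence admits a $(4,4,2)$ intersection representation on a tree \cite{golumbic2009intersection}. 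The degree bound $4$ in that representation is what caps the number of neighbors of $u$ and $v$ that can form two-pairs, and therefore the number of two-pairs (and re-colorings) that must be updated. Without this, or some substitute for it, your constant bound is unsupported.

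Two of your intermediate claims are also inaccurate. First, you assert that if $f_G(u)=f_G(v)$ then $(u,v)$ must have been contracted as a two-pair in the sequence $S_G$; this is false --- two vertices share a color whenever they are merged into the same final clique vertex through a \emph{chain} of contractions, and the paper's Case~2 is devoted precisely to the situation $f_G(u)=f_G(v)$ with $(u,v)\notin S_G$. Second, the claim that re-coloring a single endpoint always suffices is too strong: in the paper's Case~3-1 example, inserting $(v_2,v_5)$ invalidates both the two-pair $(v_2,v_5)$ and the unrelated-looking two-pair $(v_1,v_4)$, and the update re-pairs four vertices into $(v_2,v_4)$ and $(v_1,v_5)$. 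The correct statement is that the number of updates is bounded by a constant larger than one, determined by the bounded neighborhood structure, not that the change is confined to the inserted edge's endpoints.
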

}

While in general it requires $\left| V(G) \right|-\left| {K_G} \right|$ iterations to color a weakly chordal graph (see Algorithm~\ref{alg:weakly_static}), 
Algorithm \ref{alg:weakly_insertion} avoids re-coloring all vertices given a small topology perturbation. It turns out that the re-coloring of a constant number of vertices could yield the optimal vertex coloring solution.
For instance, the only replacement of $S_3 \rightarrow D_5$ by $S_3 \rightarrow D_6$ in Fig. \ref{fig:convex} gives us the optimal solution in the perturbed topology. 
% Moreover, 
The proof is due to the fact that
% it can be proved that 
the message conflict graph is the square of line graph of chordal bipartite network topology, 
% which is a subclass of weakly chordal graphs, 
and thus
possesses the $(4, 4, 2)$ representation of an intersection (tree) model \cite{golumbic2009intersection}.
As such, the number of two-pairs that require re-coloring is upper-bounded by a constant.
{%\color{blue} 
This can be applied straightforwardly to chordal graphs as well 
because they are a subclass of weakly chordal graphs.
}

Nevertheless, the time complexity of dynamic operations in weakly chordal graphs is still challenging to analyze. Thus, we take a step back and consider a subclass of weakly chordal graphs -- chordal graphs -- whose time complexity analysis is more feasible thanks to their property of the PEO. 
The PEO of a chordal graph is an ordering of the vertices such that, for each vertex, itself and its neighbors that occur after it in the ordering form a clique.
{%\color{red}
For chordal graphs, coloring vertices greedily
% Concerning the static procedure for minimum coloring in chordal graphs, we scan through the vertices 
in the reverse order of the PEO yields the optimal coloring, 
% perfect elimination order and color the vertices in the meantime, 
which takes $O(n)$ time.
}
% {\color{red}Ya-Chun, isn't it $O(n)$??}
As such, given edge insertion, if we can maintain the PEO by adjusting the orders of a small number of vertices, then the updating time complexity can be upper bounded, as the following corollary says.
% {\color{blue}
% Thus, we assume that the perfect elimination order is already known when discussing the dynamic operation in which edge insertion is applied through adversarial perturbation.
% Suppose either one of the endpoints of the newly-inserted edge is no longer a simplicial vertex, i.e., which does not conform to the property of the perfect elimination order. In that case, we identify an 
% appropriate vertex from the neighbors of this endpoint and bound the time complexity by the maximum degree among all vertices.
% }

{%\color{blue}
\begin{corollary}
The updating time complexity of Algorithm~\ref{alg:chordal_insertion} is bounded by O($\Delta$), where $\Delta$ denotes the maximum degree among the vertices.
\end{corollary}
}

The updating time complexity is counted for the re-coloring operations due to edge insertion.
% Thus, we assume that the perfect elimination order is already known when discussing the dynamic operation in which edge insertion is applied through adversarial perturbation.
% {\color{blue}
Given an existing PEO for a chordal graph, an edge insertion only makes at most $\Delta$ vertices unsatisfied in the PEO. The re-ordering of these vertices can thus maintain a valid PEO. 
% {\color{red} Ya-Chun, is this statement correct??}
Precisely, suppose either one of the endpoints of the newly-inserted edge is no longer a simplicial vertex, i.e., which does not follow the property of the perfect elimination order. In that case, we identify an appropriate vertex from the neighbors of this endpoint and locally modify the order.
{%\color{red}
In other words, searching for a vertex that conforms to the simplicial property for each edge insertion can be guaranteed within the neighborhood, and thus the updating time complexity for the algorithm is bounded by the maximum degree among all vertices. 
}
% {\color{red} Ya-Chun, these two sentences may need to rewrite. They are not very clear.}
% }

\longdelete{
\begin{algorithm}
 \caption{Dynamic Edge Insertion}
 \label{alg:weakly_insertion}
 \begin{algorithmic}[1]
 \renewcommand{\algorithmicrequire}{\textbf{Input:}}
 \renewcommand{\algorithmicensure}{\textbf{Output:}}
 \REQUIRE a weakly chordal graph $G$ and an edge $(u,v)$ 
 \ENSURE a maximum clique $K_H$, a minimum number of colors $\left| {f_H} \right|$ and a two-pair solution order $S_H$
 \STATE Insert the edge $(u,v)$ into graph $G$;
 \STATE $H \gets G.insert(u,v)$
 \STATE \textbf{Case 1}
 \IF {$(u,v) \notin S_G$ and $f_G(u) \neq f_G(v)$}
%  \STATE $S_H \gets S_G$;
%  \STATE $K_H \gets K_G$;
 \STATE $\left| {f_H} \right| \gets \left| {f_G} \right|$;
 \ENDIF
 \STATE \textbf{Case 2}
 \IF {$(u,v) \notin S_G$ and $f_G(u) = f_G(v)$}
 \IF {$\left| {K_H} \right| = \left| {K_G} \right|$}
 \STATE $\left| {f_H} \right| \gets \left| {f_G} \right|$;
 \ELSE
 \STATE $\left| {f_H} \right| \gets \left| {f_G} \right|+1$;
 \ENDIF
 \ENDIF
 \STATE \textbf{Case 3}
 \IF {$(u,v) \in S_G$}
 \IF {$\left| {K_H} \right| = \left| {K_G} \right|$}
 \STATE $\left| {f_H} \right| \gets \left| {f_G} \right|$;
 \ELSE
 \STATE $\left| {f_H} \right| \gets \left| {f_G} \right|+1$;
 \ENDIF
 \ENDIF
 \end{algorithmic}
\end{algorithm}
}

% \vspace{-5pt}
% \begin{algorithm}[htb]
%  \caption{Dynamic Update for Edge Insertion {in Weakly Chordal Graphs}}
%  \label{alg:weakly_insertion}
%  \begin{algorithmic}[1]
%  \renewcommand{\algorithmicrequire}{\textbf{Input:}}
%  \renewcommand{\algorithmicensure}{\textbf{Output:}}
%  \REQUIRE a weakly chordal graph $G$, a two-pair solution order $S_G$ and an edge $(u,v) \notin E(G)$
%  \ENSURE a maximum clique $K_H$ and a minimum number of colors $\left| {f_H} \right|$ 
% %  and a two-pair solution order $S_H$
%  \STATE insert the edge $(u,v)$ into graph $G$;
%  \STATE $H \gets G.insert(u,v)$
%  \IF {$(u,v) \notin S_G$}
%     \IF {$f_G(u) \neq f_G(v)$}
%     \STATE $\left| {f_H} \right| \gets \left| {f_G} \right|$; \hspace{4cm}\%{{Case 1}}
%     \ELSIF{$\left| {K_H} \right| = \left| {K_G} \right|$}
%     \STATE $\left| {f_H} \right| \gets \left| {f_G} \right|$; \hspace{4cm}\%{{Case 2-1}}
%     \ELSE
%     \STATE $\left| {f_H} \right| \gets \left| {f_G} \right|+1$; \hspace{3.5cm}\%{{Case 2-2}}
%     \ENDIF
%  \ELSIF{$\left| {K_H} \right| = \left| {K_G} \right|$}
%  \STATE $\left| {f_H} \right| \gets \left| {f_G} \right|$; \hspace{4.35cm}\%{{Case 3-1}}
%  \ELSE
%  \STATE $\left| {f_H} \right| \gets \left| {f_G} \right|+1$;
%  \hspace{3.8cm}\%{{Case 3-2}}
%  \ENDIF
%  \end{algorithmic}
% \end{algorithm}

\section{Graph Model and Algorithm}

\subsection{Notation}
In addition to graph definitions in Section \ref{sec:graph_def}, we formally introduce some notations dedicated to the algorithm.
% \smallskip
% \noindent
% {\bf Notation.}
% \subsubsection{Notation}
\begin{itemize}
    \item $V(G)= \{v_1,v_2,...,v_n\}$: The set of vertices of graph $G$. 
    % $\left| V(G) \right|=n$ denotes the number of vertices.
    \item {$E(G)= \{e_1,e_2,...,e_m\}$: The set of edges of graph $G$.}
    % $\left| E(G) \right|=m$ denotes the number of edges.
    \item {$N(v)= \{u \in V(G) | (u,v) \in E(G)\}$: The open neighborhood of $v$.} {We use $d(v)=\left| N(v) \right|$ to denote the degree of $v$.}
    \item $N[v]= \{N(v) \cup {v}\}$: The closed neighborhood of $v$. 
    % We then use $d[v]=\left| N[v] \right|$ to denote the degree of $v$. 
    % {\color{red}Ya-Chun, this definition is different from the common one???}
    % \item {\color{cyan} $d[v]=\left| N[v] \right|$: The degree of $v$.}
    \item $K_G$: The maximum clique over all cliques of graph $G$. Thus, $\left| {K_G} \right|$ is the clique number. 
    % size of $K_G$.
    \item $f_G$: A vertex coloring function $f_G: V(G) \mapsto \{1,\dots,c\}$
    % A function 
    that assigns integer from $1$ to $c$ to each vertex in $G$, such that $f_G(v_i) \ne f_G(v_j)$ if $(v_i,v_j) \in E(G)$.
    % adjacent vertices are assigned different integers.
    Hence, we use $\left| {f_G} \right|$ to denote the number of colors used in graph $G$ with the coloring function $f_G$.
    \item $\{x,y\}$: A two-pair is a pair of non-adjacent vertices $x$ and $y$ such that every chordless path between them has exactly two edges.
    \item $G(x,y \to z)$: The graph obtained by replacing vertices $x$ and $y$ of $G$ with a vertex $z$, such that $z$ is adjacent to exactly those vertices of $G-\{x,y\}$ that are adjacent to at least one of $\{x,y\}$. Here, the operation $(x,y \to z)$ is denoted by \emph{contraction}.
    \item $S_G$: The two-pair solution order obtained from graph $G$.
    % \item $S_H$: The two-pair solution order obtained from graph $H$.
    \item {$v_s$: A simplicial vertex is a vertex if $N[v_s]$ is a clique.}
    \item {PEO: The \emph{perfect elimination order} is an ordering $v_x,..., v_y$ such that each $v_i$ is simplicial in the subgraph induced by the vertices $v_x,..., v_i$.}
    \item {$V(G)_{\geq v}$: The set of vertices of graph $G$ whose indices are greater than or equal to $v$.} 
\end{itemize}

In what follows, we describe the proposed dynamic algorithms for weakly chordal graphs with edge insertion and deletion in Sections~\ref{sec:weakly_insertion} and~\ref{sec:weakly_deletion}, and for chordal graphs in Sections~\ref{sec:chordal_insertion} and~\ref{sec:chordal_deletion}. To make the description more accessible, we also illustrate the algorithms with concrete examples, whereas the correctness of the algorithms for general cases is given Section~\ref{sec:proof}.

\begin{algorithm}[htb]
 \caption{Dynamic Update for Edge Insertion {in Weakly Chordal Graphs}}
 \label{alg:weakly_insertion}
 \begin{algorithmic}[1]
 \renewcommand{\algorithmicrequire}{\textbf{Input:}}
 \renewcommand{\algorithmicensure}{\textbf{Output:}}
 \REQUIRE a weakly chordal graph $G$, a two-pair solution order $S_G$ and an edge $(u,v) \notin E(G)$
 \ENSURE a maximum clique $K_H$ and a minimum number of colors $\left| {f_H} \right|$ 
%  and a two-pair solution order $S_H$
 \STATE insert the edge $(u,v)$ into graph $G$;
 \STATE $H \gets G.insert(u,v)$
 \IF {$(u,v) \notin S_G$}
    \IF {$f_G(u) \neq f_G(v)$}
    \STATE $\left| {f_H} \right| \gets \left| {f_G} \right|$; \hspace{4cm}\%{{Case 1}}
    \ELSIF{$\left| {K_H} \right| = \left| {K_G} \right|$}
    \STATE $\left| {f_H} \right| \gets \left| {f_G} \right|$; \hspace{4cm}\%{{Case 2-1}}
    \ELSE
    \STATE $\left| {f_H} \right| \gets \left| {f_G} \right|+1$; \hspace{3.5cm}\%{{Case 2-2}}
    \ENDIF
 \ELSIF{$\left| {K_H} \right| = \left| {K_G} \right|$}
 \STATE $\left| {f_H} \right| \gets \left| {f_G} \right|$; \hspace{4.35cm}\%{{Case 3-1}}
 \ELSE
 \STATE $\left| {f_H} \right| \gets \left| {f_G} \right|+1$;
 \hspace{3.8cm}\%{{Case 3-2}}
 \ENDIF
 \end{algorithmic}
\end{algorithm}

% % \smallskip
% % \noindent
% % {\bf The Maximum Clique Problem.}
% \subsubsection{The Maximum Clique Problem}
% A maximum clique is a clique with the largest possible number of vertices. 
% The maximum clique problem is then defined to solve the problem for finding a largest clique in a given graph.

% \smallskip
% \noindent
% {\bf The Minimum Coloring Problem.}
% \subsubsection{The Minimum Coloring Problem}
% A coloring of a graph is an assignment 
% %% problem 
% that labels the vertices with colors, 
% such that no two vertices sharing the same edge have the same color.
% The minimum coloring problem is formulated to find a partition of the vertices into the smallest number of 
% %% stable 
% label 
% sets.

% \subsection{Static Algorithm}

\subsection{Dynamic Algorithm for Edge Insertion {in Weakly Chordal Graphs}}
\label{sec:weakly_insertion}

% \vspace{-5pt}
\begin{algorithm}[htb]
\caption{OPT(G): Static Procedure for Minimum Coloring {in Weakly Chordal Graphs}}
 \label{alg:weakly_static}
 \begin{algorithmic}[1]
 \renewcommand{\algorithmicrequire}{\textbf{Input:}}
 \renewcommand{\algorithmicensure}{\textbf{Output:}}
 \REQUIRE a weakly chordal graph $G$
 \ENSURE a maximum clique $K_G$, a minimum number of colors $\left| {f_G} \right|$ and a two-pair solution order $S_G$
 \IF {$G$ has no two-pair $\{x,y\}$}
 \STATE $K_G \gets V(G)$;
    \FOR {$i=1$ to $n$}
    \STATE $f_G(v_i) \gets i$;
    \ENDFOR
 \STATE \textbf{STOP}
 \ELSE
 \STATE $S_G \gets S_G+\{x,y\}$
 \STATE $J \gets G(x,y \to z)$;
 \STATE $K_J,f_J,S_J \gets OPT(J)$;
 \IF {$z \notin K_J$}
 \STATE $K_G \gets K_J$;
 \ELSIF{$x$ is adjacent to all of $K_J-\{z\}$}
 \STATE $K_G \gets K_J-\{z\}+\{x\}$;
 \ELSE
 \STATE $K_G \gets K_J-\{z\}+\{y\}$;
 \ENDIF
 \STATE $f_G(x) \gets f_G(y) \gets f_J(z)$;
 \FOR {each $v_i \in J-\{x,y\}$}
 \STATE $f_G(v_i) \gets f_J(v_i)$;
 \ENDFOR
 \ENDIF
 \end{algorithmic} 
\end{algorithm}
% \vspace{-5pt}

In the static setting,
the algorithm proposed in~\cite{hayward1989optimizing} (reproduced in Algorithm~\ref{alg:weakly_static} for self-containedness) can solve a number of optimization problems for weakly chordal graphs, relying on the property that every weakly chordal graph is either a clique or has a two-pair.
Given a weakly chordal graph other than a clique, the algorithm repeatedly finds a two-pair and reduces the graph size by one by contracting the two-pair into one vertex, until there is no two-pair. 
Note that based on the assumption, the weakly chordal property is maintained when contracting each two-pair. 
Here the derived sequence of two-pairs is recorded in $S_G$.
The original graph is finally transformed into a clique by repeated contractions, 
and thus the coloring of the clique can be trivially obtained.
Meanwhile,
the solution of the maximum clique problem is transformed to the minimum coloring problem to obtain
\emph{chromatic number}.

We further consider the dynamic operation in which edge insertion is applied through
% allowed concerning the 
adversarial perturbation.
Without loss of generality, assume we obtain a two-pair solution order $S_G$ from Algorithm~\ref{alg:weakly_static} for a weakly chordal graph $G$.
After inserting an edge $(u,v)$ into the graph, we deal with the changes that would occur according to the two-pair solution order.
Through our proposed dynamic algorithm, 
i.e. Algorithm~\ref{alg:weakly_insertion}, 
we aim at quickly finding the two-pairs
with the known sequence $S_G$ for updating, including the ones that cannot be contracted 
and 
the others needed to be contracted due to the inserted edge, instead of greedily re-searching for new two-pairs by Algorithm~\ref{alg:weakly_static}.
Note that the two-pair solution order after the insertion of $(u,v)$, i.e., $S_H$, will be specified in the details of each case.

We mainly divide the scenario into three cases according to two conditions.
One is whether the inserted edge exists in the original two-pair solution order $S_G$ of graph $G$, and the other is whether the two endpoints of the inserted edge (i.e. vertex $u$ and $v$) have the same color in $f_G$.

{\bf Case 1}: $(u,v) \notin S_G$ and $f_G(u) \neq f_G(v)$, where $(u,v)=(v_3,v_4)$ (as shown in Fig.~\ref{fig:c1}).
In the case, since $f_G(u)$ is not equal to $f_G(v)$, the maximum clique $K_H$ remains the same as $K_G$. Hence, $\left| {f_H} \right|$ is equal to $\left| {f_G} \right|$, even though $S_H$ may not be the same as $S_G$.

\begin{figure}[htb]
\centering
% \vspace{-8pt}
\includegraphics[scale=0.5]{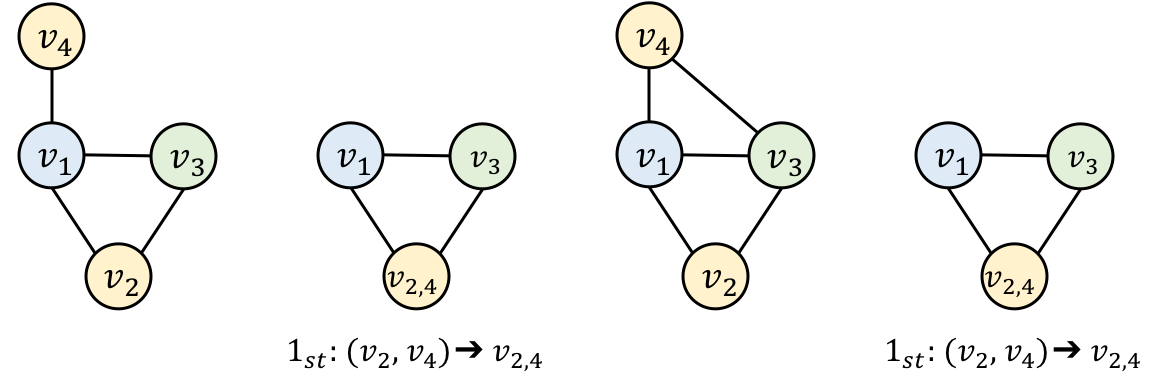}
\caption{The original graph $G$, graph $H$ and the contractions.}
\label{fig:c1}
% \vspace{-17pt}
\end{figure}

{\bf Case 2}: $(u,v) \notin S_G$ and $f_G(u) = f_G(v)$.
If $f_G(u)$ and $f_G(v)$ are the same in graph $G$, either of them is forced to update to a new value due to the inserted edge.
We consider two subcases depending on whether $\left| {f_H} \right|$ is equal to $\left| {f_G} \right|$, concerning the clique size of graph $H$.

\underline{Case 2-1}: $\left| {K_H} \right| = \left| {K_G} \right|$ (as shown in Fig.~\ref{fig:c2-1-G} and Fig.~\ref{fig:c2-1-H}).
After inserting the edge $(u,v)$, if $\left| {K_H} \right|$ remains the same, then $\left| {f_H} \right|$ is equal to $\left| {f_G} \right|$.
Note that $S_H$ may not be the same as $S_G$; that is, the indices of the vertices may be changed.
    
\begin{figure}[htb]
\centering
% \vspace{-8pt}
\includegraphics[scale=0.5]{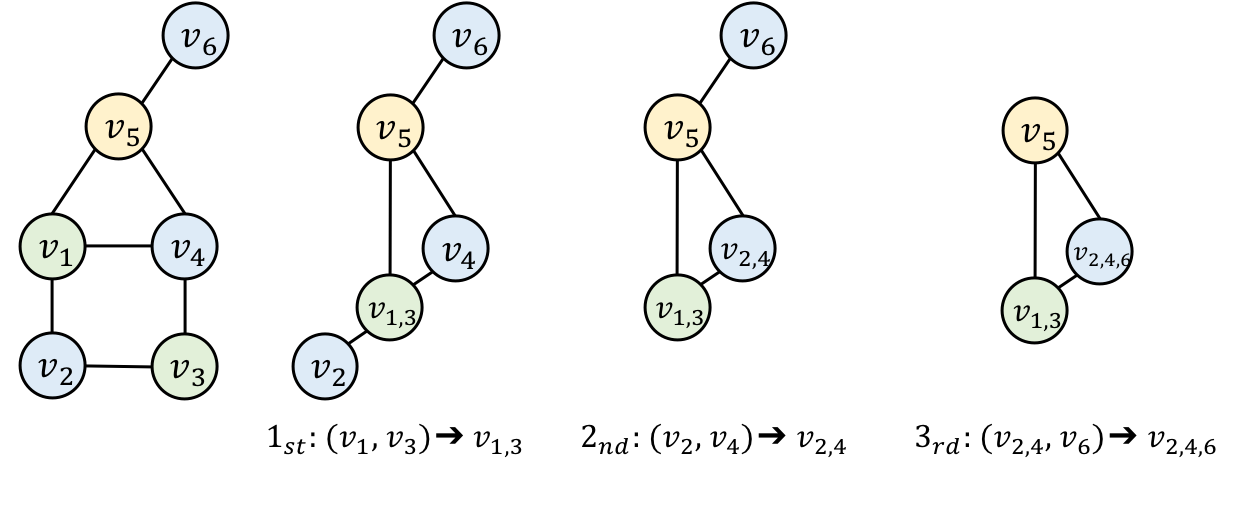}
% \vspace{-10pt}
\caption{The original graph $G$ and its contractions (Case 2-1).}
\label{fig:c2-1-G}
% \vspace{-5pt}
\end{figure}

\begin{figure}[htb]
\centering
% \vspace{5pt}
\includegraphics[scale=0.5]{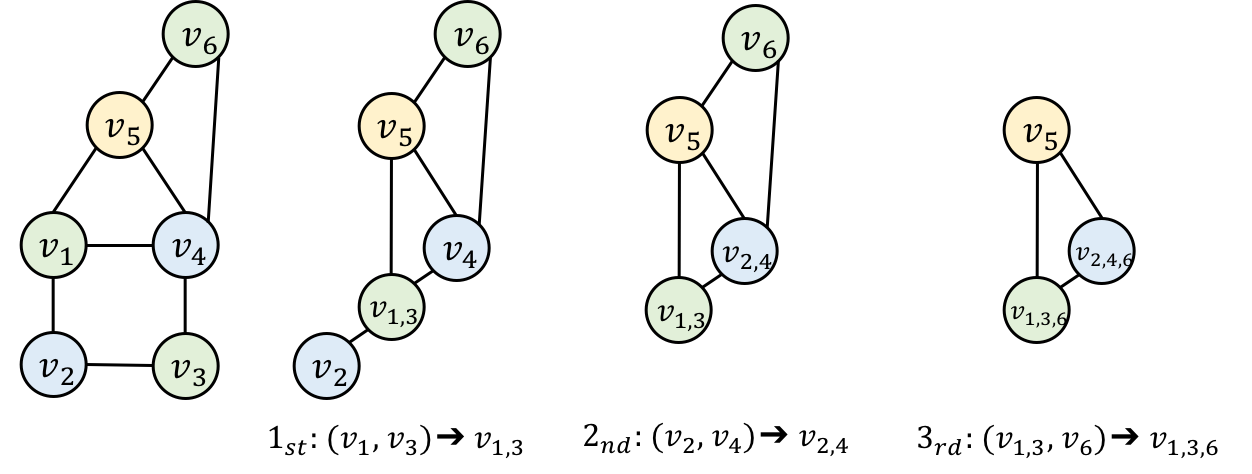}
% \vspace{-5pt}
\caption{The graph $H$ and its contractions (Case 2-1).}
\label{fig:c2-1-H}
% \vspace{-7pt}
\end{figure}
    
\underline{Case 2-2}: $\left| {K_H} \right| \neq \left| {K_G} \right|$ (as shown in Fig.~\ref{fig:c2-2-G} and Fig.~\ref{fig:c2-2-H}).
In this case, if $\left| {K_H} \right|$ and $\left| {K_G} \right|$ are not equal, we have $\left| {K_H} \right| = \left| {K_G} \right|+1$. 
In other words, the contractions of graph $H$ is reduced by one due to the inserted edge $(u,v)$, saying $\left| {S_H} \right| = \left| {S_G} \right|-1$.
Hence, $\left| {f_H} \right| = \left| {f_G} \right|+1$.
    
\begin{figure}[htb]
\centering
% \vspace{-5pt}
\includegraphics[scale=0.5]{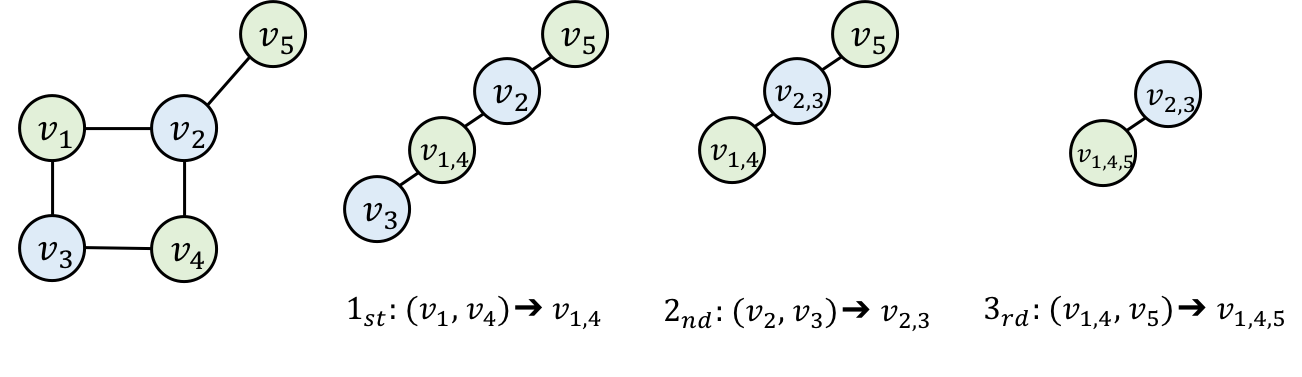}
% \vspace{-10pt}
\caption{The original graph $G$ and its contractions (Case 2-2).}
\label{fig:c2-2-G}
% \vspace{-5pt}
\end{figure}

\begin{figure}[htb]
\centering
% \vspace{5pt}
\includegraphics[scale=0.5]{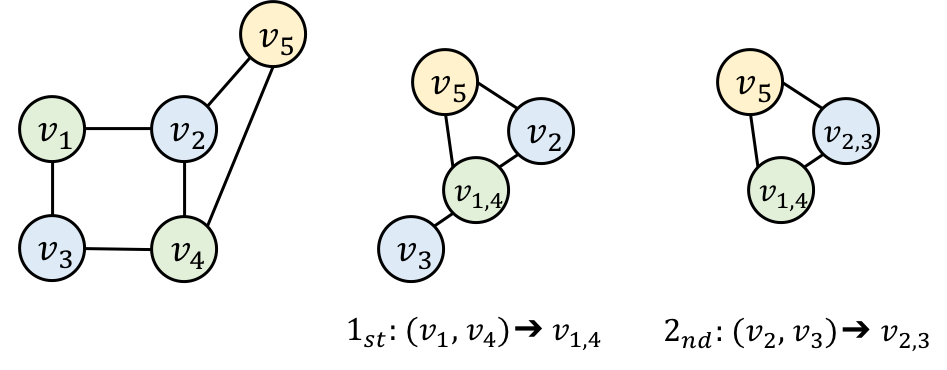}
% \vspace{-5pt}
\caption{The graph $H$ and its contractions (Case 2-2).}
\label{fig:c2-2-H}
% \vspace{-5pt}
\end{figure}

% \newpage

{\bf Case 3}: $(u,v) \in S_G$.
If the inserted edge $(u,v)$ belongs to $S_G$, it implies $f_G(u) = f_G(v)$.
The color of either of the two vertices has to be changed.
% In this scenario, 
Since $u$ and $v$ are adjacent now, 
not only vertex $u$ and vertex $v$ cannot be contracted, but also the related vertices concerning the inserted edges.
Thus, we divide it into two subcases to determine whether $\left| {f_H} \right|$ is equal to $\left| {f_G} \right|$, concerning the clique size of graph $H$.

\underline{Case 3-1}: $\left| {K_H} \right| = \left| {K_G} \right|$ (as shown in Fig.~\ref{fig:c3-1-G} and Fig.~\ref{fig:c3-1-H}).
Although $\left| {f_H} \right|$ is equal to $\left| {f_G} \right|$ eventually in this case, $S_H$ is different from $S_G$.
As mentioned, the contraction is an action performing $(x,y \to z)$.
Therefore, in this case, the two-pair $\{u,v\}$ cannot be contracted to a single vertex.
Concerning the two-pairs in $S_G$, 
the affected ones, i.e. the vertices that are incident to the edge $(u,v)$ and related to the contracted vertex $v_{u,v}$, can no longer be a two-pair.
    
\begin{figure}[htb]
\centering
\includegraphics[scale=0.5]{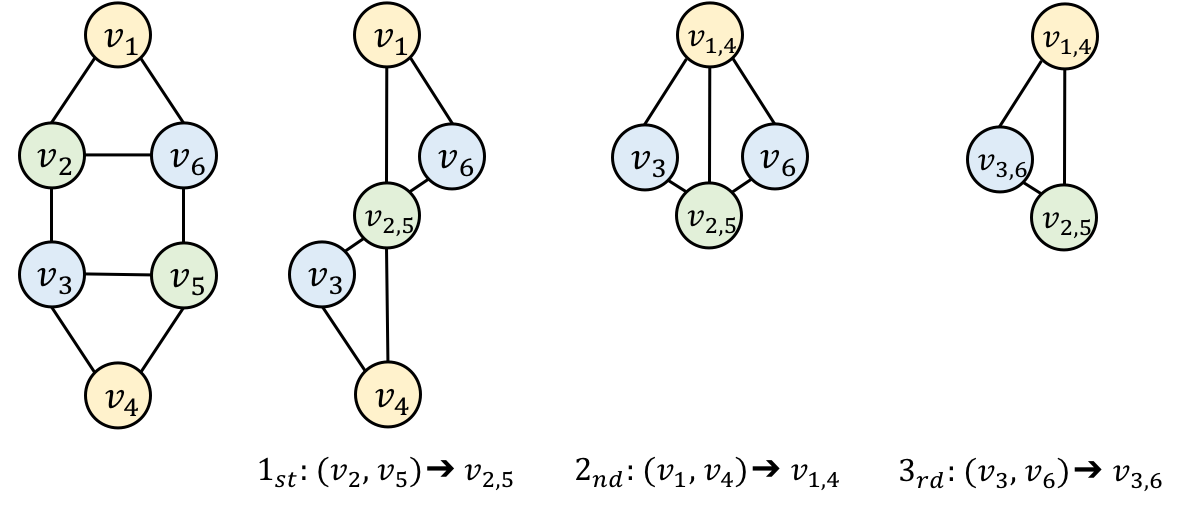}
% \vspace{-5pt}
\caption{The original graph $G$ and its contractions (Case 3-1).}
\label{fig:c3-1-G}
% \vspace{5pt}
\end{figure}

\begin{figure}[htb]
\centering
\includegraphics[scale=0.5]{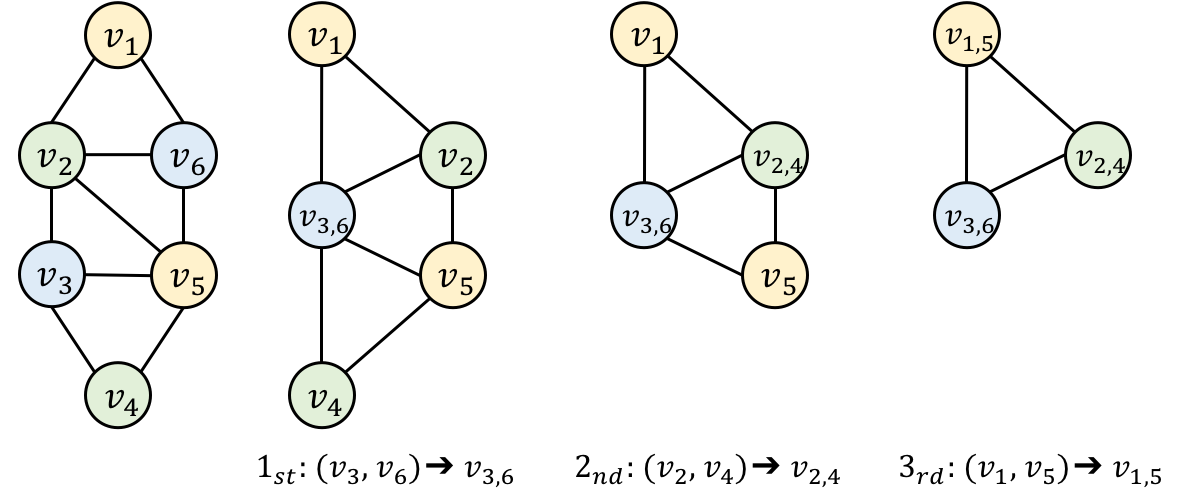}
% \vspace{-5pt}
\caption{The graph $H$ and its contractions (Case 3-1).}
\label{fig:c3-1-H}
% \vspace{-17pt}
\end{figure}

Algorithm~\ref{alg:weakly_insertion} deals with such two-pairs as follows: 
First, we obtain the affected two-pairs by referring to $S_G$. Next we follow the rule of Algorithm~\ref{alg:weakly_static} to re-search for the new two-pairs that are related to the vertices of these affected pairs. 
    
Here we give an illustration to show the scenario. 
As shown in Fig.~\ref{fig:c3-1-G}, $S_G$ contains $(v_2,v_5)$, $(v_1,v_4)$ and $(v_3,v_6)$.
In addition, $V(K_G) = \{v_{1,4},v_{2,5},v_{3,6}\}$ and $\left| {f_G} \right|=3$.
Assume the edge $(v_2,v_5)$ is inserted to $G$ (as shown in Fig. \ref{fig:c3-1-H}).
The affected two-pairs are $(v_2,v_5)$ and $(v_1,v_4)$ due to the inserted edge $(v_2,v_5)$. The reason why $v_1$ and $v_4$ is no longer a two-pair is because the chordless path between $v_1$ and $v_4$ has more than two edges.
Graph $H$ is then contracted through the two-pair $(v_3,v_6)$ in $S_G$.
Next, we use Algorithm~\ref{alg:weakly_static} to search for the new two-pairs adjacent to the vertices of the affected two-pairs, i.e. $v_2$ and $v_1$. 
Thus, $(v_2,v_4)$ and $(v_1,v_5)$ are then contracted to $v_{2,4}$ and $v_{1,5}$, respectively.
Finally, $V(K_H) = \{v_{3,6},v_{2,4},v_{1,5}\}$ and $\left| {K_H} \right| = \left| {K_G} \right|$. 
Therefore, $\left| {f_H} \right|$ remains the same as $\left| {f_G} \right|$.
    
\underline{Case 3-2}: $\left| {K_H} \right| \neq \left| {K_G} \right|$ (as shown in Fig.~\ref{fig:c3-2-G} and Fig.~\ref{fig:c3-2-H}).
In this case, $\left| {K_H} \right| = \left| {K_G} \right|+1$ due to the edge insertion. 
To put it in another way, the number of contractions of graph $H$ is one time less than that of graph $G$.
Since $\left| {S_H} \right| = \left| {S_G} \right|-1$, $\left| {f_H} \right| = \left| {f_G} \right|+1$.
The two-pair $\{u,v\}$ cannot be contracted to a single vertex in this case.
The affected two-pairs are also obtained by referring to $S_G$, including the edge $(u,v)$, the vertices that are incident to the edge $(u,v)$ and related to the contracted vertex $v_{u,v}$.
After that, the new two-pairs that are related to the vertices of these affected pairs are found and contracted until $K_H$ is derived.

\begin{figure}[htb]
\centering
% \vspace{-10pt}
\includegraphics[scale=0.5]{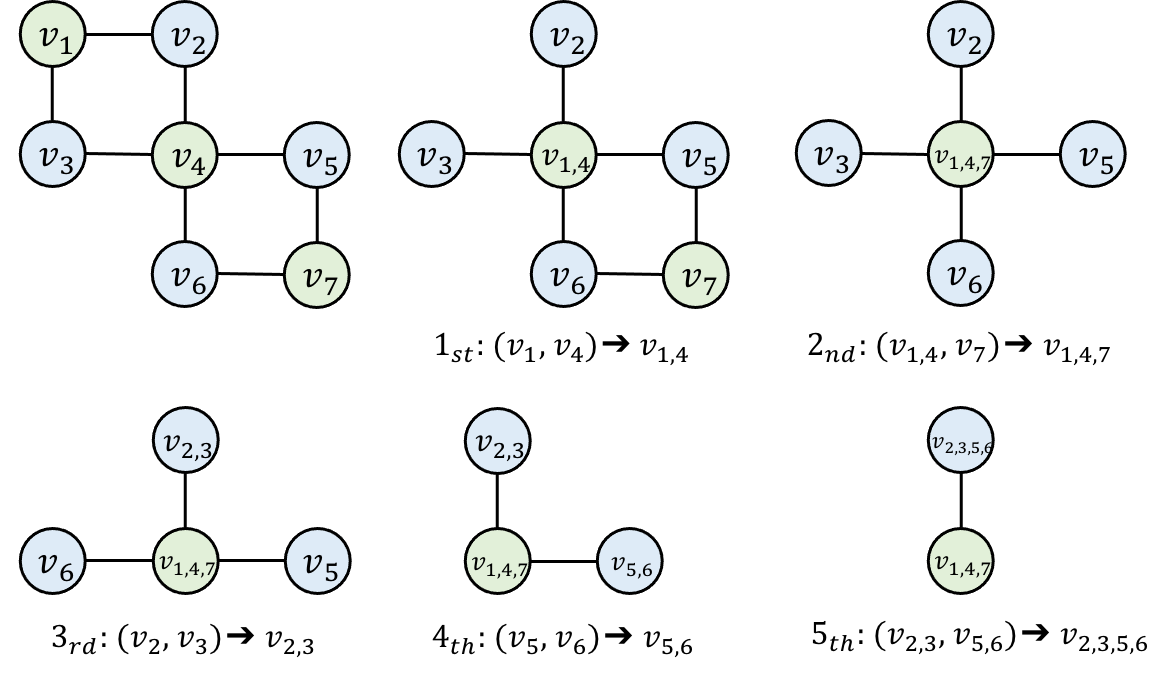}
% \vspace{-5pt}
\caption{The original graph $G$ and its contractions (Case 3-2).}
\label{fig:c3-2-G}
% \vspace{5pt}
\end{figure}

\begin{figure}[htb]
\centering
\includegraphics[scale=0.5]{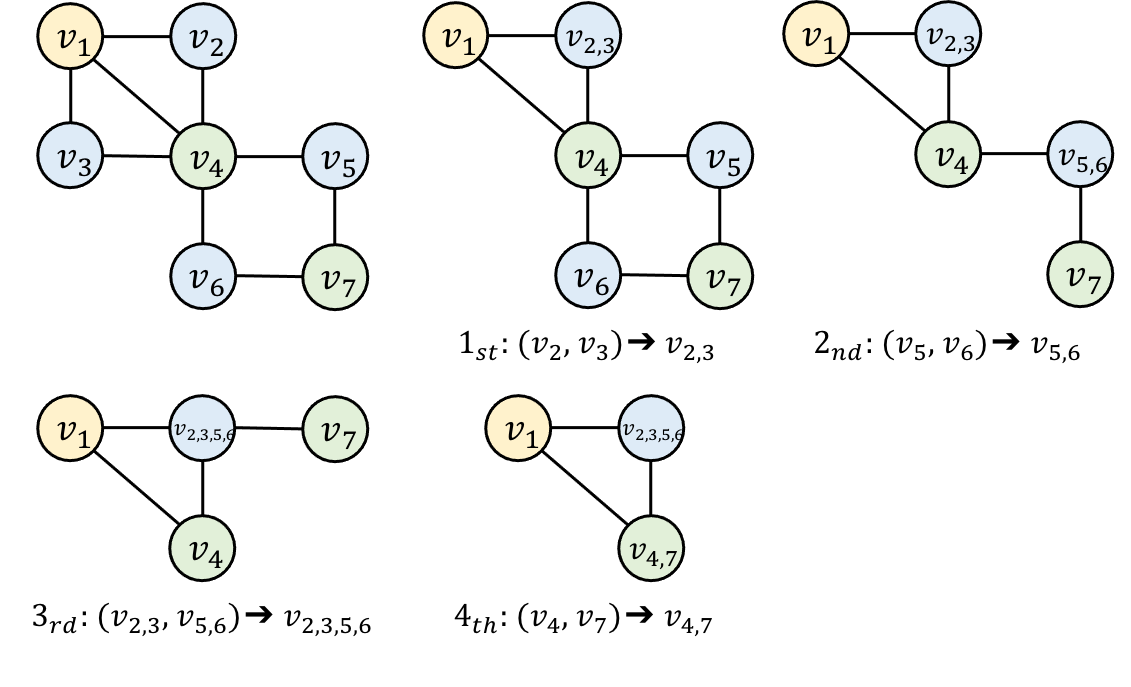}
% \vspace{-10pt}
\caption{The graph $H$ and its contractions (Case 3-2).}
\label{fig:c3-2-H}
% \vspace{-17pt}
\end{figure}

\subsection{Dynamic Update for Edge Deletion {in Weakly Chordal Graphs}}
\label{sec:weakly_deletion}
In Algorithm \ref{alg:weakly_insertion}, we discussed the cases according to two mechanisms: one is whether the inserted edge exists in the two-pair solution order obtained from the original graph, and the other is whether the two endpoints of the inserted edge have the same color in the original graph. 
However, the two mechanisms are unnecessary to consider, concerning edge deletion. The reason is that if $(u, v) \in E(H)$, $f_H(u)$ is not equal to $f_H(v)$. In addition, $u$ and $v$ will never form a two-pair, thus not being contained in $S_H$.
Therefore, when deleting an edge $(u, v)$ from graph $H$, we only consider whether the clique size is reduced by one or remains the same.

\begin{algorithm}[htb]
 \caption{Dynamic Update for Edge Deletion {in Weakly Chordal Graphs}}
 \label{alg:weakly_deletion}
 \begin{algorithmic}[1]
 \renewcommand{\algorithmicrequire}{\textbf{Input:}}
 \renewcommand{\algorithmicensure}{\textbf{Output:}}
 \REQUIRE a weakly chordal graph $H$
 \ENSURE a maximum clique $K_G$ and a minimum number of colors $\left| {f_G} \right|$
 \STATE delete the edge $(u,v)$ from graph $H$;
 \STATE $G \gets H.delete(u,v)$
 \IF {$\left| {K_G} \right| = \left| {K_H} \right|$}
 \STATE $\left| {f_G} \right| \gets \left| {f_H} \right|$;
 \hspace{4cm}\%{{Case 1}}
 \ELSE
 \STATE $\left| {f_G} \right| \gets \left| {f_H} \right|-1$;
 \hspace{3.4cm}\%{{Case 2}}
 \ENDIF
 \end{algorithmic}
\end{algorithm}

Without loss of generality, suppose we obtain a two-pair solution order $S_H$ from Algorithm \ref{alg:weakly_static} for a weakly chordal graph $H$.
Here is the idea about how we deal with the dynamic update of the edge deletion of $(u,v)$ from graph $H$ according to Algorithm \ref{alg:weakly_deletion}.  
We first contract the two-pairs based on the known sequence $S_H$, and then re-search for a new two-pair, if needed, due to the affected edge $(u,v)$ in $H$ by Algorithm~\ref{alg:weakly_static}. 

{\bf Case 1}: $\left| {K_G} \right| = \left| {K_H} \right|$.
Suppose we have Fig. \ref{fig:c3-1-H} as an input weakly chordal graph $H$. Following $S_H$ obtained from Algorithm \ref{alg:weakly_static}, we then contract the two-pairs $(v_3,v_6)$, $(v_2,v_4)$ and $(v_1,v_5)$ and obtain $V(K_G) = \{v_{1,5},v_{2,4},v_{3,6}\}$ and $\left| {f_G} \right|=3$, as shown in Fig. \ref{fig:d3-1-G}. 
This is the simple case that $K_G$ remains the same as $K_H$.

{\bf Case 2}: $\left| {K_G} \right| \neq \left| {K_H} \right|$.
We illustrate the other case that $\left| {K_G} \right| \neq \left| {K_H} \right|$ by using Fig.~\ref{fig:c3-2-H} as an input weakly chordal graph $H$. 
We follow $S_H$ and contract the two-pairs $(v_2,v_3)$, $(v_5,v_6)$, $(v_{2,3},v_{5,6})$ and $(v_4,v_7)$. However, since $(v_1,v_4)$ is deleted from $H$, there exists one more two-pair in graph $G$, i.e., $(v_1,v_{4,7})$. The clique $V(K_G)$ is then obtained and $\left| {K_G} \right|$ is reduced by one, as shown in Fig.~\ref{fig:d3-2-G}. Thus, $\left| {f_G} \right|=2$. 
Namely, the number of contractions of graph $G$ is one more than that of graph $H$.
That is, $\left| {S_G} \right| = \left| {S_H} \right|+1$, $\left| {f_G} \right| = \left| {f_H} \right|-1$. 
% {\color{red}
% Obviously, Theorem~\ref{thm:complexity} can be applied directly.
% Ya-Chun, why do we refer to Theorem 2 here??
% }

\begin{figure}[htb]
\centering
\includegraphics[scale=0.5]{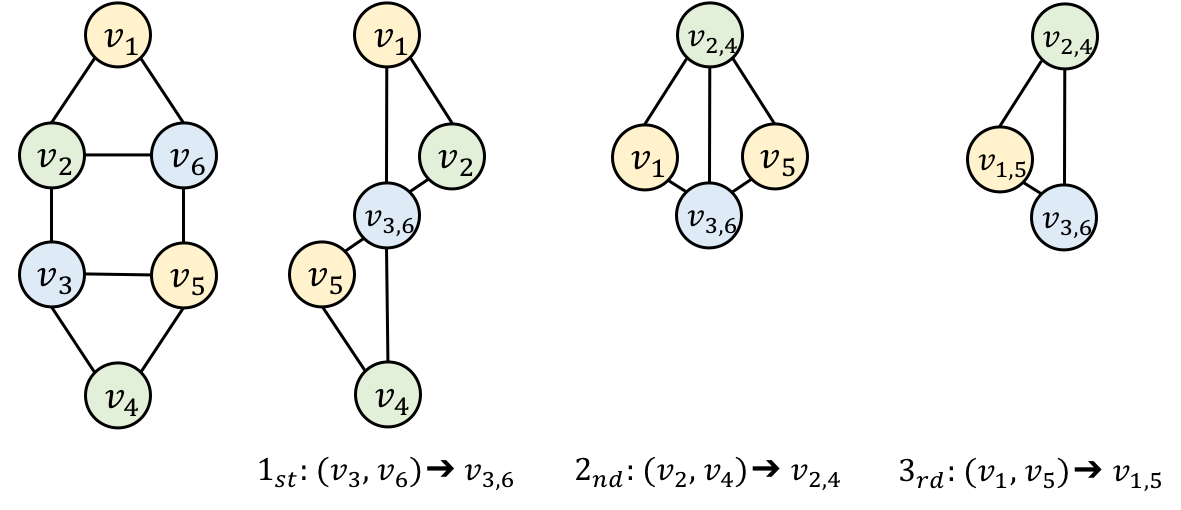}
\caption{The graph $G$ and its contractions (Case 1).}
\label{fig:d3-1-G}
% \vspace{-15pt}
\end{figure}

\begin{figure}[htb]
\centering
\includegraphics[scale=0.5]{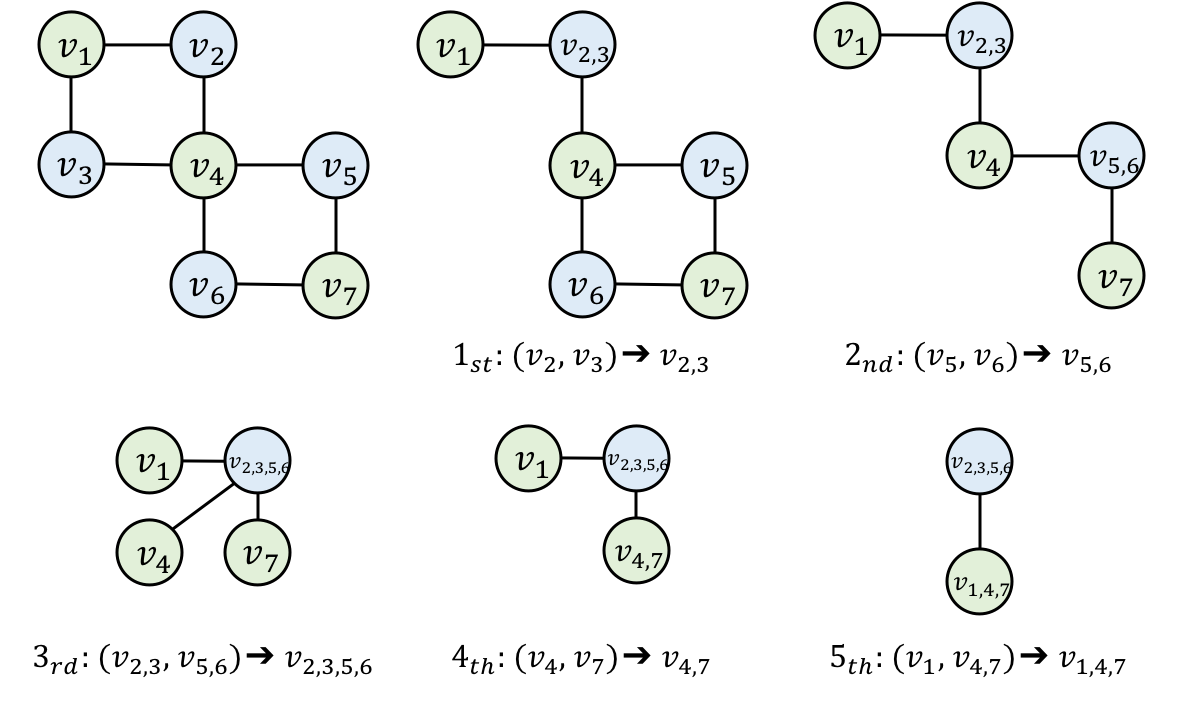}
\caption{The graph $G$ and its contractions (Case 2).}
\label{fig:d3-2-G}
% \vspace{-15pt}
\end{figure}

\subsection{Dynamic Update for Edge Insertion {in Chordal Graphs}}

\label{sec:chordal_insertion}
When it comes to chordal graphs, as they are a subset of weakly chordal graphs, Algorithms \ref{alg:weakly_insertion} and \ref{alg:weakly_deletion} apply to chordal graphs as well. Nevertheless, thanks to  the more dedicated properties of chordal graphs, the updating complexity can be further reduced. In what follows, we detail two new dynamic algorithms for chordal graphs with edge insertion and deletion.

{%\color{cyan}
Before proceeding further, we introduce some essential properties of chordal graphs.
Chordal graphs are perfect graphs. Any induced subgraph of a chordal graph is still a chordal graph.
\longdelete{
In a chordal graph $G$, a vertex $v$ of $G$ is called {\em simplicial} if and only if the subgraph induced by the vertex set $\{v\} \cup N(v)$ is a complete graph.
That is, a vertex $v$ is said simplicial if its adjacency set (i.e., close-neighborhood, \textcolor{red}{Ya-Chun, have we defined the adjacency set?? Perhaps we can move the definitions of simplicial and PEO to the beginning of this section, like two-pairs.}) is a clique.
A chordal graph $G$ on $n$ vertices is said to have a {\em perfect elimination order (PEO)} if and only if there is an ordering $\{v_1,..., v_n\}$ such that each $v_i$ is simplicial in the subgraph induced by the vertices $\{v_1,..., v_i\}$.
}
Moreover, a graph is chordal if and only if it has a perfect elimination
order~\cite{rose1970triangulated}.
Based on the above features, when considering the static procedure for minimum coloring for chordal graphs, we scan through the vertices in the reverse order of the perfect elimination order and color each vertex with the smallest color not being used among its successors. It is simply a greedy algorithm that can be colored in polynomial time if the perfect elimination order is already known~\cite{gavril1972algorithms}.
We then further consider the dynamic operation in which edge insertion is applied through adversarial perturbation with minimal updating cost considering both the perfect elimination order and the coloring (as shown in Algorithm \ref{alg:chordal_insertion}).}

\begin{algorithm}[htb]
 \caption{Dynamic Update for Edge Insertion {in Chordal Graphs}}
 \label{alg:chordal_insertion}
 \begin{algorithmic}[1]
 \renewcommand{\algorithmicrequire}{\textbf{Input:}}
 \renewcommand{\algorithmicensure}{\textbf{Output:}}
 \REQUIRE a chordal graph $G$ and an edge $(u,v) \notin E(G)$
 \ENSURE a maximum clique $K_H$ and a minimum number of colors $\left| {f_H} \right|$
 \STATE insert the edge $(u,v)$ into graph $G$;
 \STATE $H \gets G.insert(u,v)$
 \IF {$f_G(u) \neq f_G(v)$} 
 \STATE $\left| {f_H} \right| \gets \left| {f_G} \right|$; \hspace{4cm}\%{{Case 1}}
 \ELSIF{$N[u] \cap V(G)_{\geq u}$ form a clique} 
    \IF{$\left| {K_H} \right| = \left| {K_G} \right|$}
    \STATE $\left| {f_H} \right| \gets \left| {f_G}  \right|$; \hspace{3.5cm}\%{{Case 2-1}}
    \ELSE
    \STATE $\left| {f_H} \right| \gets \left| {f_G} \right|+1$; \hspace{2.8cm}\%{{Case 2-1}}
    \ENDIF
 \ELSIF{$N[u] \cap V(G)_{\geq u}$ do not form a clique}
 \STATE $\left| {f_H} \right| \gets \left| {f_G} \right|$; \hspace{4cm}\%{{Case 2-2}}
%  \ELSE
%  \STATE $\left| {f_H} \right| \gets \left| {f_G} \right|$; \hspace{4cm}\%{{Case 2-3}}
 \ENDIF
 \end{algorithmic}
\end{algorithm}

\begin{figure}[htb]
\centering
\includegraphics[scale=0.5]{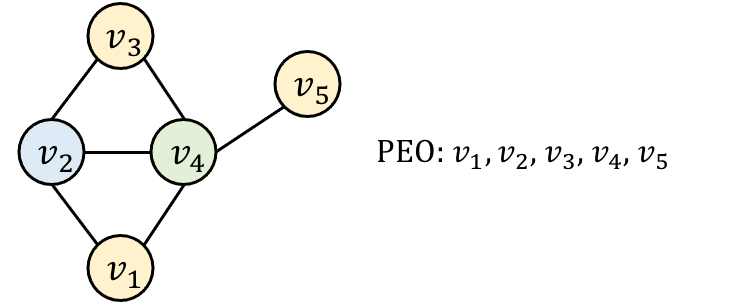}
\caption{The original chordal message conflict graph $G$.}
\label{fig:ch_0}
\end{figure}

Given the original chordal message conflict graph $G$,
% (as shown in Fig.~\ref{fig:ch_0}), 
an inserted edge $(u,v)$ yields a new graph $H=G+(u,v)$, which is still chordal.
Without loss of generality, assume we acquire a perfect elimination order and the coloring from the static procedure for a chordal graph $G$ (as shown in Fig.~\ref{fig:ch_0}).
After inserting an edge $(u,v)$ into the graph, we first determine whether the two endpoints of the inserted edge (i.e. vertices $u$ and $v$) have the same color in $f_G$. 
The second condition is to determine whether the neighbors of the vertex between $u$ and $v$ with the smaller index form a clique or not. 
{Notice that we always consider the vertices appeared after $u$ and $v$ in the PEO since the newly-inserted edge affects only the subsequent vertices.}

{\bf Case 1}: $f_G(u) \neq f_G(v)$, where $(u,v)=(v_2,v_5)$ (as shown in Fig.~\ref{fig:ch_1_2-1} (a)).
In the case, since $f_G(u)$ is not equal to $f_G(v)$, the maximum clique $K_H$ remains the same as $K_G$. 
%交換v_2和v_3的位置(v_2index比較小，而v_3是比v_2大的所有鄰居中最小index的vertex）
The PEO is adjusted slightly while swapping the positions of $v_2$ and $v_3$, since $v_2$ is the endpoint with the smaller index of the inserted edge and $v_3$ is the vertex with the smallest index among all the neighbors greater than $v_2$ {(i.e. $N[v_2] \cap V(G)_{> v_2}$)}.
Accordingly, $\left| {f_H} \right|$ is equal to $\left| {f_G} \right|$ with the coloring unchanged.

\begin{figure}[htb]
\centering
\includegraphics[scale=0.5]{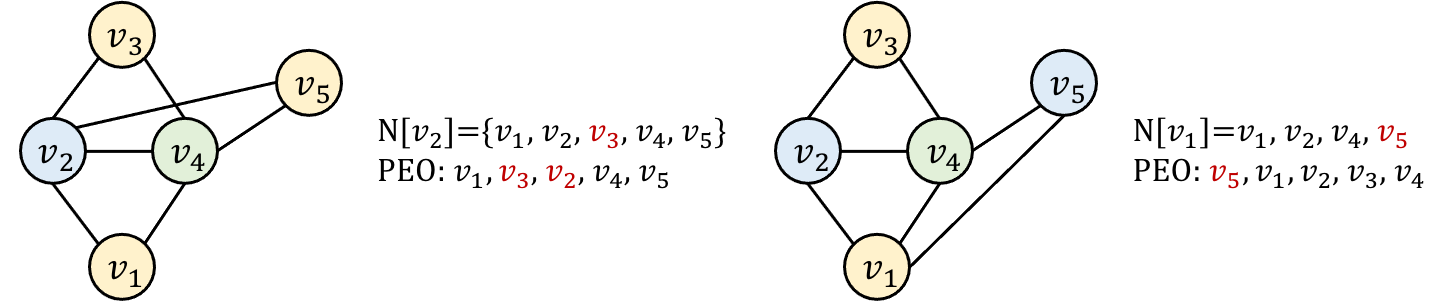}
\caption{(a) The graph $H$ and its PEO (Case 1); (b) The graph $H$ and its PEO (Case 2-1).}
\label{fig:ch_1_2-1}
\end{figure}

% {\color{red} Ya-Chun, for the subfigures, you can either use Left and Right, or label them as (a) and (b).}

{\bf Case 2}: $f_G(u) = f_G(v)$.
If $f_G(u)$ and $f_G(v)$ are the same in graph $G$, either of them is forced to update to a new value due to the inserted edge.
We then consider two subcases depending on whether the neighbors of the vertex with the smaller index between $u$ and $v$ form a clique or not, concerning the graph $H$.

%不一定clique數+1，有可能remain the same
{%\color{cyan}

\longdelete{
\underline{Case 2-2}: $N(u)$ do not form a clique, but the neighbors of $u$ with the greater indices do form a clique, where $(u,v)=(v_3,v_5)$ (as shown in Fig.~\ref{fig:ch_3}).
In the case, we swap the positions between the endpoint of the inserted edge with the smaller index (i.e. $v_3$) and the vertex with the largest index among the neighbors greater than itself (i.e. $v_5$). 
Consequently, both the PEO and the coloring are modified marginally.
In this case, the PEO remains unchanged since the property of {\em simpliciality} is not destroyed concerning the inserted edge. 
Meanwhile, $\left| {K_H} \right|$ (resp. $\left| {f_H} \right|$) remain the same as $\left| {K_G} \right|$ (resp. $\left| {f_G} \right|$), but the color of the graph $H$ is moderately modified due to the inserted edge.

\begin{figure}[htb]
\centering
\includegraphics[scale=0.5]{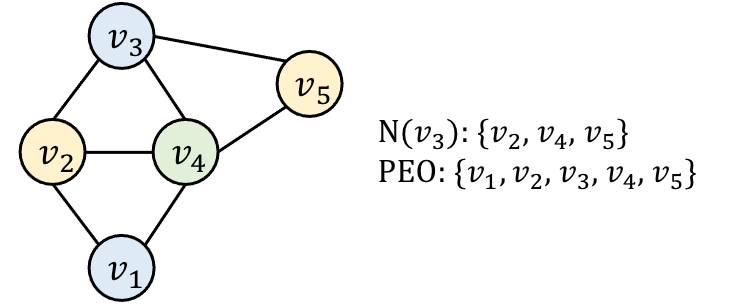}
\caption{The graph $H$ and its PEO (Case 2-2).}
\label{fig:ch_3}
\end{figure}
}

\underline{Case 2-1}: $N[u] \cap V(G)_{\geq u}$ do not form a clique. 
In this case, we have to search for a vertex among $N(u)$ that conforms to the {\em simplicial} property (note that $u$ is the endpoint with the smaller index of the inserted edge), and directly move the vertex in front of $u$ in the PEO.
\longdelete{Also noticed that the we always consider the positions after the inserted edge in the PEO since the inserted edge only affects the subsequent vertices.}

As shown in 
%%shou?? in or on (consistent throughout the paper)
Fig.~\ref{fig:ch_1_2-1}~(b), for example, the inserted edge $(u,v)=(v_1,v_5)$ and $N(v_1) = \{v_2,v_4,v_5\}$.
In the original PEO, $v_1$ initially conforms to the property of simpliciality, which means ${v_1, v_2, v_4}$ form a clique. 
Nevertheless, the property of simpliciality of $v_1$ is violated due to the inserted edge $(v_1, v_5)$.
Therefore, $v_5$ is chosen and moved in front of $v_1$ in the PEO since the subgraph induced by $N[v_5]$ is a complete graph. 
The new PEO is then updated as follows: $v_5, v_1, v_2, v_3, v_4$, while $\left| {K_H} \right|$ and $\left| {f_H} \right|$ keep unchanged with the coloring only altered on $v_5$.

\underline{Case 2-2}: $N[u] \cap V(G)_{\geq u}$ form a clique.
% In this case, the PEO remains unchanged since $\{u\} \cup N(u)$ is a complete graph.
In this case, the PEO remains unchanged since the property of {\em simpliciality} is not destroyed concerning the inserted edge. 
We then further consider whether the clique size of graph $H$ remains the same as graph $G$ or not.

Namely, if $\left| {K_H} \right|$ and $\left| {K_G} \right|$ are not equal, we have $\left| {K_H} \right| = \left| {K_G} \right|+1$ and thus $\left| {f_H} \right| = \left| {f_G} \right|+1$. 
For the scenario where $(u,v)=(v_1,v_3)$ (as shown in Fig.~\ref{fig:ch_2} (a)), the graph $H$ is still chordal since $N[v_1] \cap V(G)_{\geq v_1}$ is a complete graph.
Furthermore, the clique size of $H$ is $3+1=4$, and the coloring of $v_3$ is locally changed, where $v_3$ is the endpoint with the larger index of the inserted edge.

\begin{figure}[htb]
\centering
\includegraphics[scale=0.5]{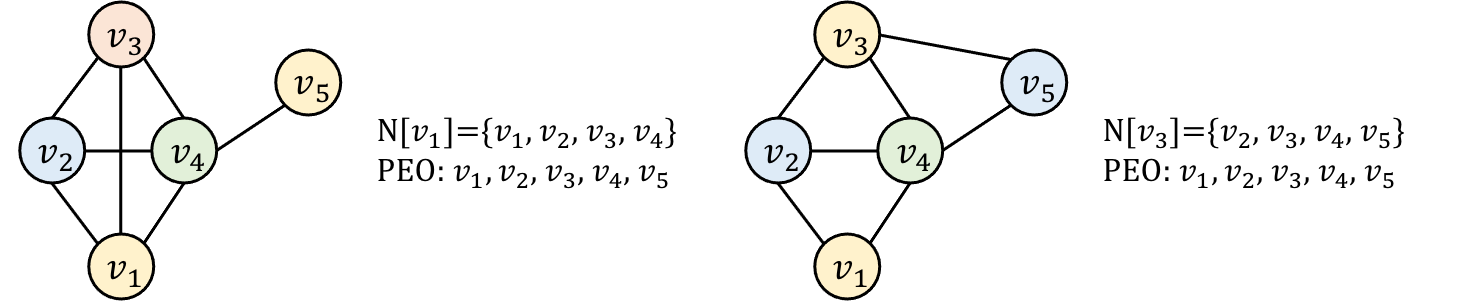}
\caption{(a) The graph $H$ where $(u,v)=(v_1,v_3)$; (b) The graph $H$ where $(u,v)=(v_3,v_5)$ (Case 2-2).}
\label{fig:ch_2}
\end{figure}

For the other scenario where $(u,v)=(v_3,v_5)$ (as shown in Fig.~\ref{fig:ch_2} (b)), $N[v_3] \cap V(G)_{\geq v_3}$ form a clique while $\left| {K_H} \right|$ remains the same as $\left| {K_G} \right|$. 
The coloring of $v_5$ is then modified locally since $v_5$ is the endpoint with the larger index of the inserted edge.

All in all, the PEO remains unchanged in this case, but the color of the graph $H$ is marginally adjusted due to the inserted edge.
}

{%\color{blue}
\subsection{Dynamic Update for Edge Deletion {in Chordal Graphs}}
\label{sec:chordal_deletion}

\begin{algorithm}[htb]
 \caption{Dynamic Update for Edge Deletion {in Chordal Graphs}}
 \label{alg:chordal_deletion}
 \begin{algorithmic}[1]
 \renewcommand{\algorithmicrequire}{\textbf{Input:}}
 \renewcommand{\algorithmicensure}{\textbf{Output:}}
 \REQUIRE a chordal graph $H$
 \ENSURE a maximum clique $K_G$ and a minimum number of colors $\left| {f_G} \right|$
 \STATE delete the edge $(u,v)$ from graph $H$;
 \STATE $G \gets H.delete(u,v)$
 \IF {$\left| {K_G} \right| = \left| {K_H} \right|$}
 \STATE $\left| {f_G} \right| \gets \left| {f_H} \right|$;
 \hspace{4cm}\%{{Case 1}}
 \ELSE
 \STATE $\left| {f_G} \right| \gets \left| {f_H} \right|-1$;
 \hspace{3.4cm}\%{{Case 2}}
 \ENDIF
 \end{algorithmic}
\end{algorithm}

Similar to Algorithm~\ref{alg:weakly_deletion}, since $f_H(u) \neq f_H(v)$, we only need to ensure that the clique size is maintained the same or reduced by one when considering the edge deletion of chordal graphs in Algorithm~\ref{alg:chordal_deletion}.
Without loss of generality, suppose we obtain a PEO of a chordal graph $H$, denoted by $P_H$. 
We then first check the common neighbors of $u$ and $v$ in $H$, where $(u,v)$ denotes the deleted edge from $H$. If the indices of the common neighbors are smaller than the ones of $u$ and $v$ in $P_H$, obviously $P_H$ is invalid due to $(u,v)$, since it no longer satisfies the property of simpliciality. 
We then swap the positions of the common neighbor with the smallest index and the endpoint of the deleted egde with a smaller index in $P_H$ to ensure that the vertices in the newly updated PEO must be simplicial. 
The clique size of $G$ may remain the same or be reduced by one due to $(u,v)$, while the coloring may be modified marginally.

% \begin{figure}[htb]
% \centering
% \includegraphics[scale=0.5]{}
% \caption{The graph $G$ and its PEO (Case 1).}
% \label{fig:ch_d_1}
% \end{figure}

{\bf Case 1}: $\left| {K_G} \right| = \left| {K_H} \right|$.
Suppose we have Fig.~\ref{fig:ch_1_2-1} (a) as an input chordal graph $H$.
As shown in Fig. \ref{fig:ch_d} (a), the edge $(v_1,v_2)$ is deleted from $H$, and $v_4$ is the only common neighbor of $v_1$ and $v_2$ in the original $H$.
Since the index of $v_4$ in $P_H$ is greater than both $v_1$ and $v_2$, the deletion of $(v_1,v_2)$ does not affect the simplicial characteristic of $v_4$. 
Thus, PEO, $\left| {K_G} \right|$, $\left| {f_G} \right|$ and the coloring remain unchanged.

\begin{figure}[htb]
\centering
\includegraphics[scale=0.5]{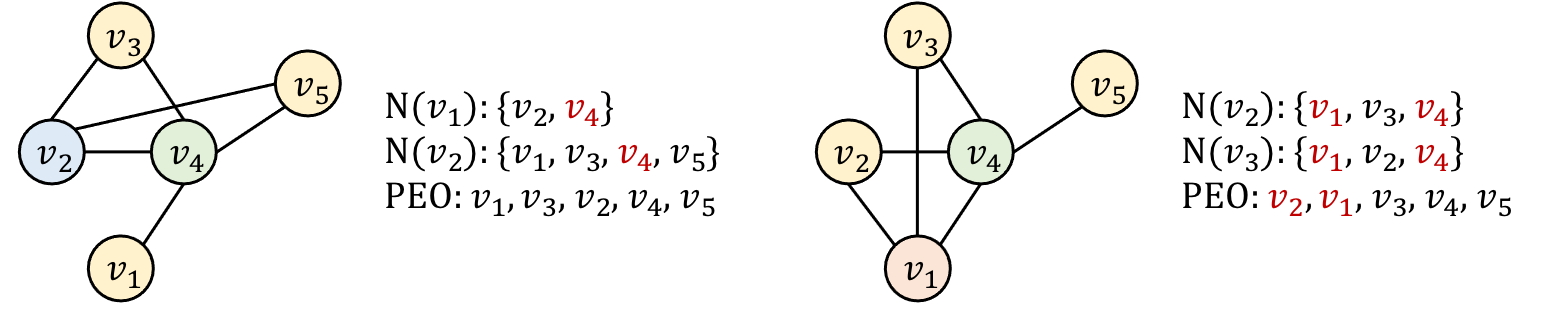}
\caption{(a) The graph $G$ and its PEO (Case 1); (b) The graph $G$ and its PEO (Case 2).}
\label{fig:ch_d}
\end{figure}

{\bf Case 2}: $\left| {K_G} \right| \neq \left| {K_H} \right|$.
We illustrate the other case that $\left| {K_G} \right| \neq \left| {K_H} \right|$ by using Fig.~\ref{fig:ch_2} (a) as an input chordal graph $H$.
As shown in Fig. \ref{fig:ch_d} (b), the edge $(v_2,v_3)$ is deleted from $H$, while $v_1$ and $v_4$ are the common neighbors of $v_2$ and $v_3$.
Nevertheless, the index of $v_1$ in $P_H$ is smaller than both $v_2$ and $v_3$, while the index of $v_4$ is greater than $v_2$ and $v_3$ on the contrary.
In other words, we ignore $v_4$ but swap the positions of $v_1$ and $v_2$ in $P_H$, since $v_1$ is the common neighbor whose index is smaller than $v_2$ and $v_3$, while $v_2$ is the vertex with the smaller index of $(v_2,v_3)$.
In the meantime, $\left| {K_G} \right|=\left| {f_G} \right|=3$, which is reduced by one.
Also noticed that the coloring is altered merely on $u$ and it will be changed to the same color as $v$, where $u$ denotes the vertex with the smaller index of the deleted edge.

% \begin{figure}[htb]
% \centering
% \includegraphics[scale=0.5]{}
% \caption{The graph $G$ and its PEO (Case 2).}
% \label{fig:ch_d_2}
% \end{figure}
}

\longdelete{
\section{Chordal Graphs}
A graph is {\em chordal} (or triangulated) if there is no induced subgraph with chordless cycles of length greater than three, i.e., every cycle with length greater than three has a chord. 
Chordal graphs are a subset of the perfect graphs and all induced subgraphs of a chordal graph are chordal graphs.
In a graph $G$, a vertex $v$ of $G$ is called {\em simplicial} if and only if the subgraph induced by the vertex set $\{v\} \cup N(v)$ is a complete graph.
That is, a vertex $v$ is said simplicial if its adjacency set is a clique.
A graph $G$ on $n$ vertices is said to have a {\em perfect elimination order (PEO)} if and only if there is an ordering $\{v_1,..., v_n\}$ such that each $v_i$ is simplicial in the subgraph induced by the vertices $\{v_1,..., v_i\}$.
A graph is chordal if and only if it has a perfect elimination
order~\cite{rose1970triangulated}. 
Based on the above properties, when considering the static procedure for minimum coloring for chordal graphs, we scan through the vertices in the reverse order of the perfect elimination order and color each vertex with the smallest color not used among its successors. It is a trivially greedy algorithm that can be colored in polynomial time if the perfect elimination order is already known.

We further consider the dynamic operation in which edge insertion is applied through adversarial perturbation with minimal updating cost considering the perfect elimination order (as shown in Algorithm \ref{alg:chordal_insertion}).

\begin{algorithm}[htb]
 \caption{Dynamic Update for Edge Insertion in Chordal Graphs}
 \label{alg:chordal_insertion}
 \begin{algorithmic}[1]
 \renewcommand{\algorithmicrequire}{\textbf{Input:}}
 \renewcommand{\algorithmicensure}{\textbf{Output:}}
 \REQUIRE a chordal graph $G$ and an edge $(u,v) \notin E(G)$
 \ENSURE a maximum clique $K_H$ and a minimum number of colors $\left| {f_H} \right|$
 \STATE insert the edge $(u,v)$ into graph $G$;
 \STATE $H \gets G.insert(u,v)$
 \IF {$f_G(u) \neq f_G(v)$} 
 \STATE $\left| {f_H} \right| \gets \left| {f_G} \right|$; \hspace{4cm}\%{{Case 1}}
 \ELSIF{$\{u\} \cup N(u)$ is a complete graph} 
    \IF{$\left| {K_H} \right| = \left| {K_G} \right|$}
    \STATE $\left| {f_H} \right| \gets \left| {f_G}  \right|$; \hspace{3.5cm}\%{{Case 2-1}}
    \ELSE
    \STATE $\left| {f_H} \right| \gets \left| {f_G} \right|+1$; \hspace{2.8cm}\%{{Case 2-1}}
    \ENDIF
 \ELSIF{$N(u)$ with the greater indices do form a clique}
 \STATE $\left| {f_H} \right| \gets \left| {f_G} \right|$; \hspace{4cm}\%{{Case 2-2}}
 \ELSE
 \STATE $\left| {f_H} \right| \gets \left| {f_G} \right|$; \hspace{4cm}\%{{Case 2-3}}
 \ENDIF
 \end{algorithmic}
\end{algorithm}

\begin{figure}[htb]
\centering
\includegraphics[scale=0.5]{ch_0.png}
\caption{The original chordal message conflict graph $G$.}
\label{fig:ch_0}
\end{figure}

Given the original chordal message conflict graph $G$,
% (as shown in Fig.~\ref{fig:ch_0}), 
an inserted edge $(u,v)$ yields a new graph $H=G+(u,v)$, which is still chordal.
Without loss of generality, assume we acquire a perfect elimination order and the coloring from the static procedure for a chordal graph $G$ (as shown in Fig.~\ref{fig:ch_0}).
After inserting an edge $(u,v)$ into the graph, we first determine whether the two endpoints of the inserted edge (i.e. vertex $u$ and $v$) have the same color in $f_G$. 
The second condition is to determine whether the neighbors of the vertex between $u$ and $v$ with the smaller index form a clique or not.

{\bf Case 1}: $f_G(u) \neq f_G(v)$, where $(u,v)=(v_2,v_5)$ (as shown in Fig.~\ref{fig:ch_1}).
In the case, since $f_G(u)$ is not equal to $f_G(v)$, the maximum clique $K_H$ remains the same as $K_G$. 
%交換v_2和v_3的位置(v_2 index比較小，而v_3是比v_2大的所有鄰居中最小index的vertex）
The PEO is adjusted slightly while swapping the positions of $v_2$ and $v_3$,  since $v_2$ is the endpoint with the smaller index of the inserted edge and $v_3$ is the vertex with the smallest index among all the neighbors greater than $v_2$.
Accordingly, $\left| {f_H} \right|$ is equal to $\left| {f_G} \right|$ with the coloring unchanged.

\begin{figure}[htb]
\centering
\includegraphics[scale=0.5]{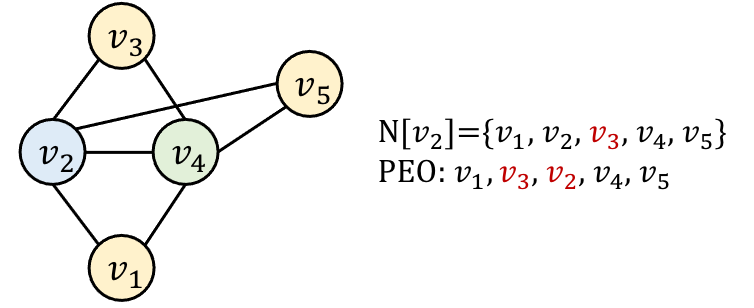}
\caption{The graph $H$ and its PEO (Case 1).}
\label{fig:ch_1}
\end{figure}

{\bf Case 2}: $f_G(u) = f_G(v)$.
If $f_G(u)$ and $f_G(v)$ are the same in graph $G$, either of them is forced to update to a new value due to the inserted edge.
We then consider three subcases depending on whether the neighbors of the vertex with the smaller index between $u$ and $v$ form a clique or not, concerning the graph $H$.

%不一定clique數+1，有可能remain the same
\underline{Case 2-1}: $N(u)$ form a clique, where $(u,v)=(v_1,v_3)$ (as shown in Fig.~\ref{fig:ch_2}).
In this case, the PEO remains unchanged since $\{u\} \cup N(u)$ is a complete graph.
We then further consider whether the clique size of graph $H$ remains the same as graph $G$ or not.
Namely, if $\left| {K_H} \right|$ and $\left| {K_G} \right|$ are not equal, we have $\left| {K_H} \right| = \left| {K_G} \right|+1$ and thus $\left| {f_H} \right| = \left| {f_G} \right|+1$.
For this scenario, the inserted edge $(v_1,v_3)$ which yields the graph $H$ is still chordal, since $\{v_1\} \cup N(v_1)$ is a complete graph and the clique size of $H$ is $3+1=4$.

\begin{figure}[htb]
\centering
\includegraphics[scale=0.5]{ch_2.png}
\caption{The graph $H$ and its PEO (Case 2-1).}
\label{fig:ch_2}
\end{figure}

\underline{Case 2-2}: $N(u)$ do not form a clique, but the neighbors of $u$ with the greater indices do form a clique, where $(u,v)=(v_3,v_5)$ (as shown in Fig.~\ref{fig:ch_3}).
\longdelete{
In the case, we swap the positions between the endpoint of the inserted edge with the smaller index (i.e. $v_3$) and the vertex with the largest index among the neighbors greater than itself (i.e. $v_5$). 
Consequently, both the PEO and the coloring are modified marginally.
}
In this case, the PEO remains unchanged since the property of {\em simpliciality} is not destroyed concerning the inserted edge. 
Meanwhile, $\left| {K_H} \right|$ (resp. $\left| {f_H} \right|$) remain the same as $\left| {K_G} \right|$ (resp. $\left| {f_G} \right|$), but the color of the graph $H$ is moderately modified due to the inserted edge.

\begin{figure}[htb]
\centering
\includegraphics[scale=0.5]{ch_3.png}
\caption{The graph $H$ and its PEO (Case 2-2).}
\label{fig:ch_3}
\end{figure}

\underline{Case 2-3}: $N(u)$ do not form a clique, where $(u,v)=(v_1,v_5)$ (as shown in Fig.~\ref{fig:ch_4}). 
In this special case, we have to search for a vertex among $u$'s neighbors that conforms to the property of {\em simplicial} (note that $u$ is the endpoint with the smaller index of the inserted edge), and directly move the vertex in front of $u$ in the PEO.
Also noticed that the we always consider the positions after the inserted edge in the PEO since the inserted edge only affects the subsequent vertices.

\begin{figure}[htb]
\centering
\includegraphics[scale=0.5]{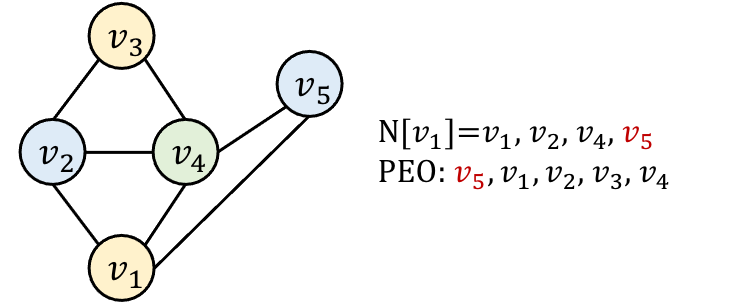}
\caption{The graph $H$ and its PEO (Case 2-3).}
\label{fig:ch_4}
\end{figure}

As shown in Fig.~\ref{fig:ch_4}, $N(v_1) = \{v_2,v_4,v_5\}$.
$v_5$ is chosen and moved in front of $v_1$ in the PEO since the subgraph induced by $\{v_5\} \cup N(v_5)$ is a complete graph. 
The new PEO is then updated as follows: $\{v_5, v_1, v_2, v_3, v_4\}$, while $\left| {K_H} \right|$ and $\left| {f_H} \right|$ keep unchanged.
}

\section{Proofs}
\label{sec:proof}
\subsection{Proof of Theorem~\ref{thm:optimality}}

We first show that Algorithms~\ref{alg:weakly_insertion} and \ref{alg:chordal_insertion} can correctly output the optimal vertex coloring when an edge is inserted, then show that such vertex coloring achieves the information-theoretic optimality in chordal networks with chordal or weakly chordal conflict graphs.

First, Algorithm~\ref{alg:weakly_insertion} builds upon the sequence of two-pairs identified by Algorithm~\ref{alg:weakly_static}, through removing the violated two-pairs due to edge insertion, and adding new valid two-pairs in the perturbed graph. According to the procedure of Algorithm~\ref{alg:weakly_insertion}, the updated sequence of two-pairs for $H$ corresponds to one of possible outputs of Algorithm~\ref{alg:weakly_static} for the perturbed graph $H$. 

The \emph{Correctness} Theorem~\cite{hayward1989optimizing}
shows that 
Algorithm~\ref{alg:weakly_static} works for an arbitrary contraction order of two-pairs in a weakly chordal graph~$G$. 
It eventually finds a largest clique as well as the corresponding minimum coloring of $G$ 
based on the perfect graph property. 
Due to the assumption that the weak chordality of graph $H$ is still maintained after inserting an edge into graph $G$, 
Algorithm \ref{alg:weakly_insertion} is correct, and still achieves the optimal vertex coloring in the weakly chordal graph $H$. 
% Next, we analyze the dynamic update cost.  

{%\color{blue}
Furthermore, for Algorithm~\ref{alg:chordal_insertion}, it is shown in \cite{gavril1972algorithms} that the minimum vertex coloring problem can be optimally solved in polynomial time when the input graph is a chordal graph. As chordal graphs are perfect, the maximum clique size is equal to the chromatic number of a chordal graph. 
An optimal coloring is then obtained by applying a greedy coloring algorithm to the vertices in the reverse order of an arbitrary perfect elimination order (PEO).
% , where a PEO is an ordering $v_1, \ldots, v_n$, such that each $v_i$ is a simplicial vertex in the subgraph induced by the vertices $\{v_1,..., v_i\}$.
In the process of edge insertion or deletion, PEO is guaranteed to be the basis of minimum vertex coloring through modifying 
the order of vertices in PEO while satisfying the simplicial properties.
Therefore, Algorithm~\ref{alg:chordal_insertion} is correct, and still achieves the optimal vertex coloring in the chordal graph $H$.
}

Second, according to \cite{yi2018tdma}, optimal vertex coloring on the weakly chordal message conflict graph (that includes chordal graphs as special cases) leads to  the information-theoretic optimality. Algorithm~\ref{alg:weakly_static} yields a sequence of two-pairs, for which the sequential color assignment results in the optimal vertex coloring for weakly chordal graphs. As such, the updated sequence of two-pairs through Algorithms~\ref{alg:weakly_insertion} and~\ref{alg:weakly_static} offers the optimal color assignment, and therefore achieves the information-theoretical optimality. 
It applies to Algorithm~\ref{alg:chordal_insertion}, as the updated PEO yields the optimal color assignment, which is also information-theoretically optimal.

% As a side remark, coding over the original and perturbed graphs is infeasible, because it is prevented by the adversary.

\begin{remark} [Edge deletion]
The proof of Theorem~\ref{thm:optimality} works for edge deletion as well, because Algorithms \ref{alg:weakly_deletion} and \ref{alg:chordal_deletion} can correctly output the optimal vertex coloring when an edge is deleted while chordality maintains, and such optimal coloring yields information-theoretical optimality for chordal and weakly chordal graphs.
\end{remark}

%% \subsection{$(4, 4, 2)$ Representation of Weakly Chordal Graphs}
\subsection{Proof of Theorem \ref{thm:complexity}}
One can observe that the update cost caused by dynamic operations may vary. In order to bound the number of two-pairs updated by our dynamic algorithm, we explore the intersection representation properties of a weakly chordal graph. Note that a message conflict graph is the square of the line graph of a network topology graph in the TIM problem.

Particularly, the intersection representation of a graph is important in many real world applications. 
For example, a tree decomposition represents the vertices of a given graph as subtrees of a tree, 
in which vertices in the given graph are adjacent if and only if the corresponding subtrees intersect in the tree. 
More formally, 
the intersection representation of a graph on a tree is generally defined as follows. 
An $(h, s, t)$-representation of a given graph $G$ comprises a collection of subtrees $\{S_v |v \in V(G)\}$ of a tree $T$ , such that
\subsubsection{} 
The maximum degree of tree $T$ is at most $h$.
\subsubsection{}
The maximum degree of each subtree $S_v$ is at most $s$.
\subsubsection{}
The two corresponding subtrees $S_u$ and $S_v$ in $T$ have at least $t$ vertices in common if and only if there is an edge between two vertices $u$ and $v$ in the graph $G$.

First, we show that every message conflict graph has a $(4,4,2)$-representation; i.e., it is a $(4,4,2)$ graph. 
Recall the forbidden structures of a $(4, 4, 2)$ graph, reported in~\cite{golumbic2009intersection}. 

\begin{theorem} \cite{golumbic2006finding}
Let $G$ be a weakly chordal $(K_{2,3}$, $\overline{4P_2}$, $\overline{P_2 \cup P_4}$, $\overline{P_6}$, $\overline{H_1}$, $\overline{H_2}$, $\overline{H_3})$-free graph, then the \textbf{Construct (4, 4, 2)-representation algorithm} finds a $(4, 4, 2)$-representation of $G$.
\end{theorem}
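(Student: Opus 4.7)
The plan is to establish the theorem in two parts: first, verify that each graph on the forbidden list is indeed an obstruction (so the hypothesis is necessary), and second, argue that in the absence of these obstructions the proposed \textbf{Construct (4,4,2)-representation algorithm} successfully returns a valid representation. For the necessity direction, I would take each forbidden structure in turn and show that any attempted collection of subtrees would violate at least one of the three defining constraints of a $(4,4,2)$-representation (the degree bound on the host tree $T$, the degree bound on each subtree $S_v$, or the ``intersect in at least $2$ vertices iff adjacent'' condition). For example, $K_{2,3}$ forces three subtrees to pairwise share two vertices with two common subtrees, which a pigeonhole argument on tree branchings shows requires either a degree-$5$ node of $T$ or a degree-$5$ node of some $S_v$. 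The dense complement graphs $\overline{4P_2}$, $\overline{P_2 \cup P_4}$, $\overline{P_6}$ and $\overline{H_1},\overline{H_2},\overline{H_3}$ would likewise demand too many pairwise $2$-intersections to be realized inside the allowed degree budget.

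For the sufficiency direction, I would exploit the structural decomposition of weakly chordal graphs that is already used in this paper, namely the two-pair contraction order of Algorithm \ref{alg:static}. The idea is to process $G$ by successive two-pair contractions, and in parallel build the host tree $T$ and the subtree family $\{S_v\}$ incrementally: each contraction identifies a small local configuration (a two-pair plus its common neighborhood) that can be attached to $T$ as a bounded-degree gadget, and each vertex's subtree $S_v$ is updated by appending the newly created tree nodes corresponding to cliques still containing $v$. Weak chordality ensures that every noncomplete subgraph encountered along the way has a two-pair, so the recursion always progresses; the absence of the listed forbidden subgraphs ensures that at no stage does the gadget attachment push a node of $T$ or of some $S_v$ past degree $4$.

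The main obstacle will be the second direction, specifically the local case analysis certifying the simultaneous degree bound of $4$ on both $T$ and every $S_v$. One must show that whenever a contraction step would force a degree-$5$ junction, the local configuration responsible must contain (as an induced subgraph) one of $K_{2,3}$, $\overline{4P_2}$, $\overline{P_2 \cup P_4}$, $\overline{P_6}$, $\overline{H_1}$, $\overline{H_2}$, or $\overline{H_3}$, contradicting the hypothesis. This is essentially a minimality/exhaustiveness statement about the forbidden list, and the cleanest way to handle it is by induction on $|V(G)|$: the base case is a finite enumeration of small graphs, while the inductive step contracts a two-pair (which preserves both weak chordality and, by a short argument, forbidden-subgraph-freeness), applies the inductive hypothesis to the smaller graph, and then verifies that re-expanding the contracted vertex into the two original vertices can be realized by duplicating (or slightly modifying) one subtree without violating any degree bound. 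The intersection-size condition is preserved automatically because the two contracted vertices had identical neighborhoods outside the two-pair.
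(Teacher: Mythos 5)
This theorem is not proved in the paper at all: it is imported verbatim from \cite{golumbic2006finding} and used as a black box inside the proof of Theorem~\ref{thm:complexity}, so there is no internal argument to compare yours against. Judged on its own merits, your proposal has a genuine gap at its core. The re-expansion step of your induction relies on the claim that ``the two contracted vertices had identical neighborhoods outside the two-pair,'' which is false under the paper's own definition: a two-pair $\{x,y\}$ only requires that every chordless $x$--$y$ path have exactly two edges, not that $N(x)=N(y)$. Already in $P_4 = v_1v_2v_3v_4$ the pair $\{v_1,v_3\}$ is a two-pair with $N(v_1)=\{v_2\}$ and $N(v_3)=\{v_2,v_4\}$. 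Since the contraction $G(x,y\to z)$ joins $z$ to $N(x)\cup N(y)$, splitting $S_z$ back into $S_x$ and $S_y$ cannot be done by mere duplication; the two subtrees must realize different $2$-intersections, and nothing in your argument controls the extra degree this costs in $T$ or in the neighbors' subtrees. Two further steps are asserted rather than proved: that two-pair contraction preserves freeness from the seven forbidden subgraphs (contraction can create new induced subgraphs, so this needs an actual argument), and the central claim that any forced degree-$5$ junction yields one of $K_{2,3}$, $\overline{4P_2}$, $\overline{P_2 \cup P_4}$, $\overline{P_6}$, $\overline{H_1}$, $\overline{H_2}$, $\overline{H_3}$ as an induced subgraph --- this exhaustiveness statement is precisely the hard content of the theorem and cannot be waved through. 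Note also that the necessity direction you sketch is not part of the statement as quoted (only sufficiency is claimed), and that the construction in \cite{golumbic2006finding} is organized around the maximal cliques and separators of $G$ rather than around two-pair contractions, so even the overall architecture you propose departs from the cited source without supplying the missing case analysis that would justify the departure.
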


If $G$ is a $(4, 4, 2)$ graph, then $G$ is weakly chordal and contain no such induced subgraphs like $K_{2,3}$, $\overline{4P_2}$, $\overline{P_2 \cup P_4}$, $\overline{P_6}$, $\overline{H_1}$, $\overline{H_2}$, and $\overline{H_3}$, 
{%\color{red}
which are shown in the left-hand side of subfigures in Fig. \ref{fig:mix}.
}

As mentioned, a message conflict graph $G$ is the square of the line graph of a graph, which implies that 
$G$ cannot contain the forbidden structures of a line graph. We revisit the recognition of a line graph and 
recall the following theorem. 

\begin{theorem} \cite{vsoltes1994forbidden}
$G$ is a line graph if and only if $G$ does not contain any of the graphs $G_1$-$G_9$ as an induced subgraph.
\end{theorem}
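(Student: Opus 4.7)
The plan is to prove the two directions separately, using Krausz's theorem on edge-clique partitions as the structural tool for the substantial direction.

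For the forward direction --- if $G = L(H)$ then $G$ avoids each of $G_1, \dots, G_9$ as an induced subgraph --- I would proceed case by case. For each forbidden $G_i$, assume for contradiction that $G_i$ appears as an induced subgraph of some $L(H)$. By Whitney's theorem, the root of a connected line graph is essentially unique (the only exceptional pair being $K_3$ and $K_{1,3}$, both mapping to $K_3$), so any realization of $G_i$ as a line graph restricts to an essentially unique candidate root $H_i'$, and a short degree/incidence argument rules each $H_i'$ out. The nine forbidden subgraphs are chosen precisely so that this incidence obstruction is fatal in every case.

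For the substantial backward direction, I would invoke Krausz's theorem: $G = L(H)$ for some $H$ if and only if $E(G)$ admits a partition into cliques $Q_1, \dots, Q_k$ such that every vertex of $G$ lies in at most two parts. Given $G$ avoiding $G_1, \dots, G_9$ as induced subgraphs, the plan is to construct such a partition from local information. First, because $G$ has no induced $K_{1,3}$ (the claw, which we may take as $G_1$), the neighborhood $N(v)$ of any vertex $v$ decomposes into at most two cliques --- the ``local cliques'' at $v$. Second, I would extend this local structure to a global edge-partition by showing that local cliques at adjacent vertices must agree on the edges they jointly cover; the remaining forbidden graphs $G_2, \dots, G_9$ encode the specific small obstructions to such agreement. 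Once the partition exists, the root graph $H$ is obtained by taking one vertex per clique $Q_i$, adding a pendant vertex for any $v \in V(G)$ that lies in only one clique, and joining them according to the incidence pattern; verifying $L(H) \cong G$ is then mechanical.

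The main obstacle is the local-to-global consistency step in the backward direction. One must show that $G_2, \dots, G_9$ collectively exhaust the combinatorial configurations in which local-clique decompositions at adjacent vertices can disagree. This is a finite but delicate case analysis: each candidate gluing defect must be expressible on a bounded-size vertex set, and one then verifies that any such defect reduces to one of the nine forbidden subgraphs. With this completeness established, the Krausz partition exists, the root graph construction is routine, and both directions close the characterization.
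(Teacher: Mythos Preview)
The paper does not prove this statement at all: it is quoted verbatim as a known result from \cite{vsoltes1994forbidden} and used as a black box inside the proof of Theorem~\ref{thm:complexity}. There is therefore no ``paper's own proof'' to compare against; your outline is a reasonable sketch of the classical Beineke/\v{S}olt\'{e}s argument, but for the purposes of this paper you should simply cite the result rather than attempt to reprove it.
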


In the right-hand side of these figures, they are induced subgraphs of forbidden structures of $(4, 4, 2)$ graph. 
Note that they are also the minimal forbidden induced subgraphs for line graph \cite{vsoltes1994forbidden}.
Accordingly, the forbidden structures of $(4, 4, 2)$ graph mentioned above are not contained in the scenario we mainly discuss about, since TDMA on the network topology is equivalent to graph coloring on its square of line graph.

\begin{figure}
	\centering
	\subfigure[$K_{2,3}$ and $G_6$]{
		\begin{minipage}{7.5cm} 
            \includegraphics[width=\textwidth]{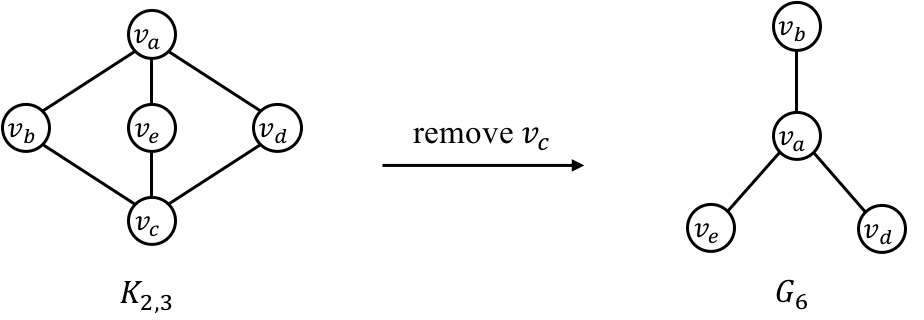} \\
		\end{minipage}
	}
	\subfigure[$\overline{4P_2}$ and $G_7$]{
		\begin{minipage}{7.5cm}
			\includegraphics[width=\textwidth]{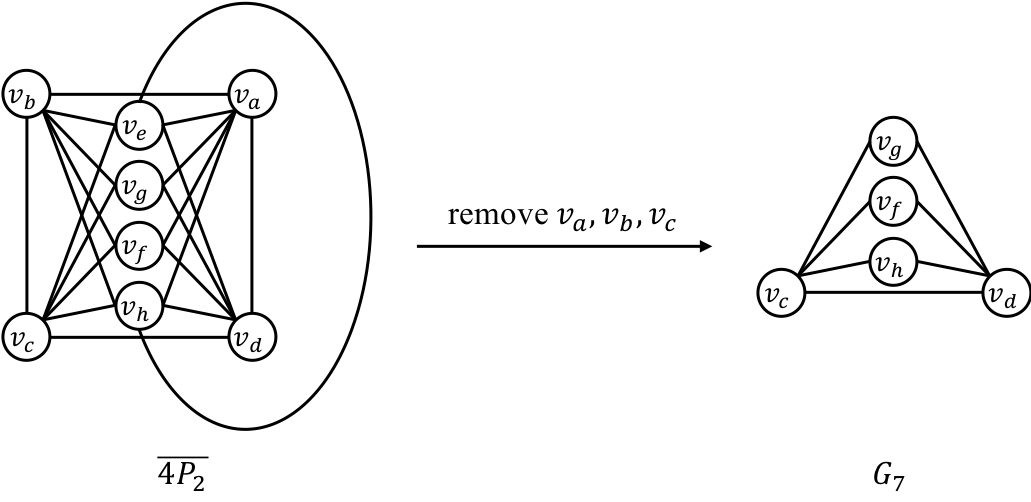} \\
		\end{minipage}
	}
	\subfigure[$\overline{P_2 \cup P_4}$ and $G_3$]{
		\begin{minipage}{7.5cm} 
            \includegraphics[width=\textwidth]{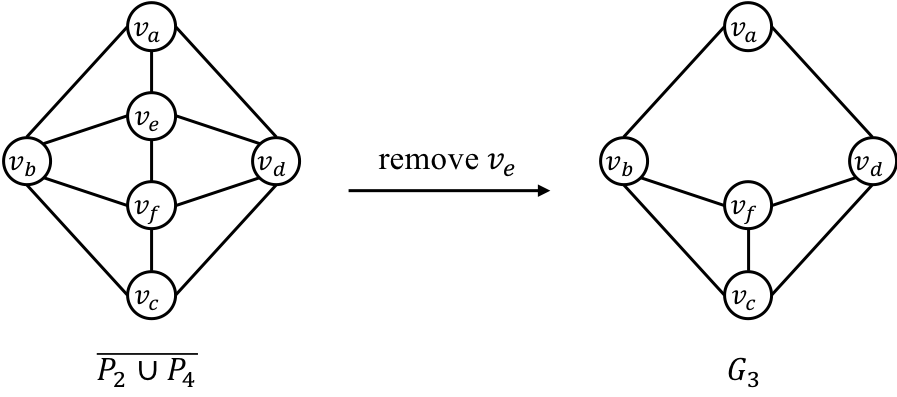} \\
		\end{minipage}
	}
	\subfigure[$\overline{P_6}$ and $G_3$]{
		\begin{minipage}{7.5cm} 
            \includegraphics[width=\textwidth]{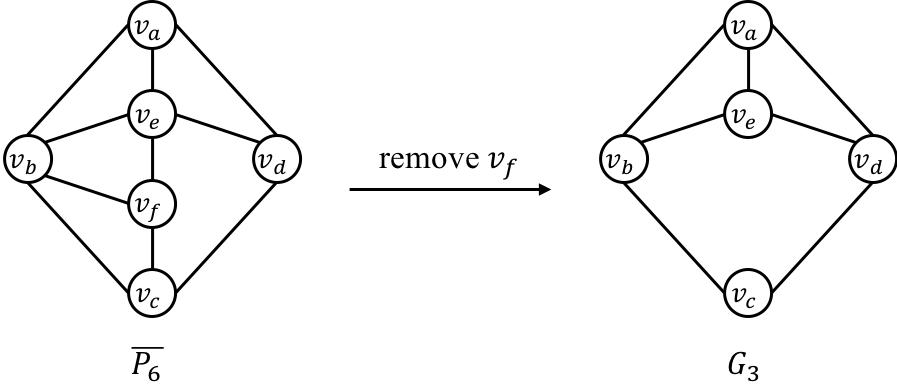} \\
		\end{minipage}
	}
	\subfigure[$\overline{H_1}$ and $G_5$]{
		\begin{minipage}{7.5cm} 
            \includegraphics[width=\textwidth]{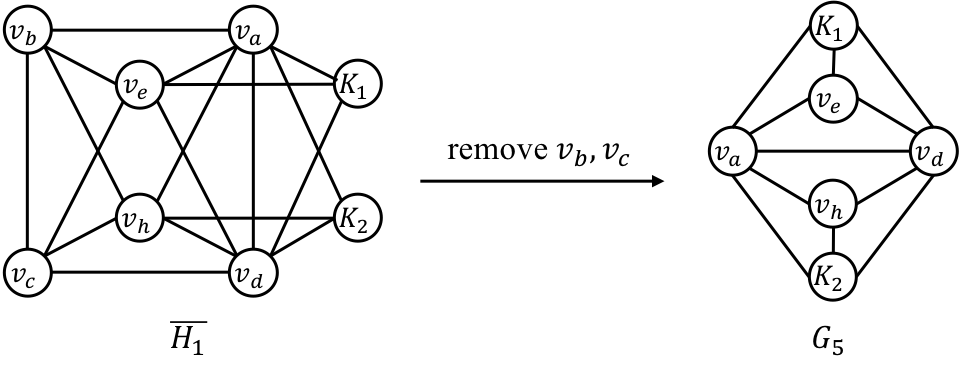} \\
		\end{minipage}
	}
	\subfigure[$\overline{H_2}$ and $G_5$]{
		\begin{minipage}{7.5cm} 
            \includegraphics[width=\textwidth]{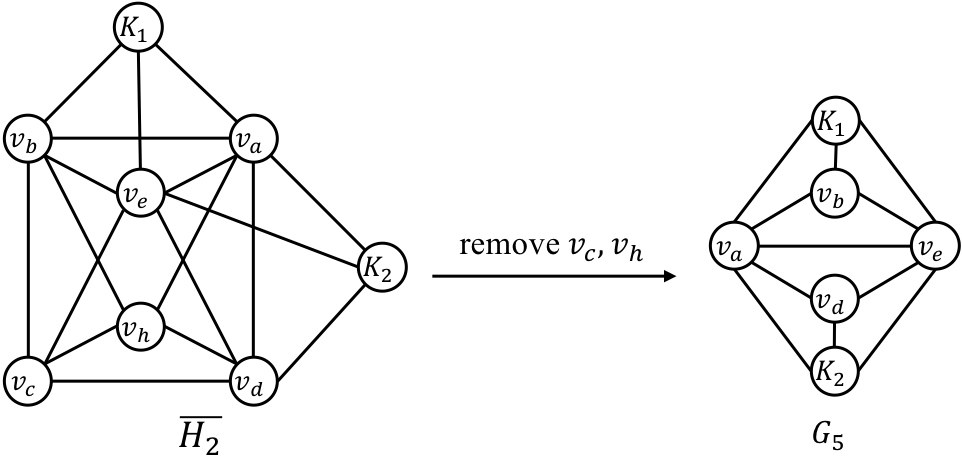} \\
		\end{minipage}
	}
	\subfigure[$\overline{H_3}$ and $G_4$]{
		\begin{minipage}{7.5cm} 
            \includegraphics[width=\textwidth]{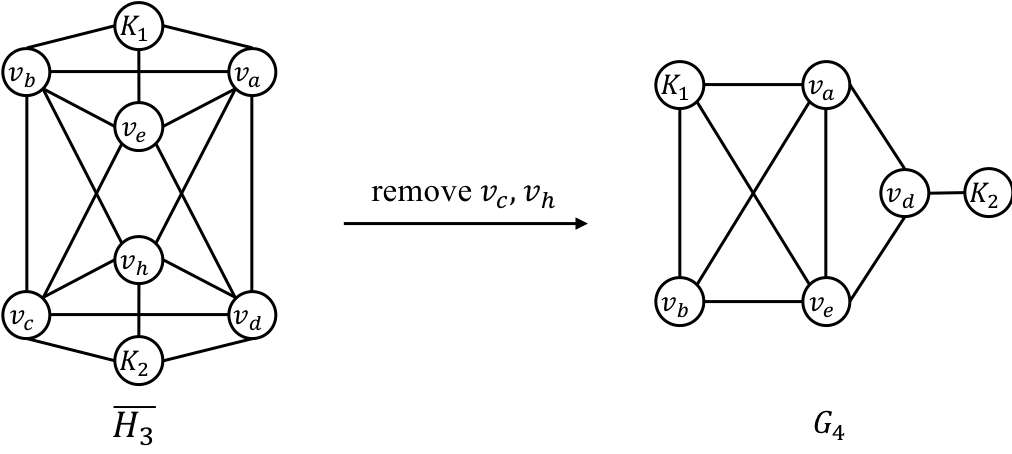} \\
		\end{minipage}
	}
		\caption{Forbidden structures of $(4, 4, 2)$ graph and the induced subgraphs} 
		\label{fig:mix}
\end{figure}

% {\color{red} Ya-Chun, can we make figures more compact? For example, we can collect Fig. 19-25 as a single graph, and each one as a subgraph?}

\longdelete{
\begin{figure}[htb]
\centering
\includegraphics[scale=0.6]{f1.png}
\caption{$K_{2,3}$ and $G_6$}
\label{fig:forbidden graph 1}
\end{figure}

\begin{figure}[htb]
\centering
\includegraphics[scale=0.6]{f2.png}
\caption{$\overline{4P_2}$ and $G_7$}
\label{fig:forbidden graph 2}
\end{figure}

\begin{figure}[htb]
\centering
\includegraphics[scale=0.6]{f3.png}
\caption{$\overline{P_2 \cup P_4}$ and $G_3$}
\label{fig:forbidden graph 3}
\end{figure}

\begin{figure}[htb]
\centering
\includegraphics[scale=0.6]{f4.png}
\caption{$\overline{P_6}$ and $G_3$}
\label{fig:forbidden graph 4}
\end{figure}

\begin{figure}[htb]
\centering
\includegraphics[scale=0.6]{f5.png}
\caption{$\overline{H_1}$ and $G_5$}
\label{fig:forbidden graph 5}
\end{figure}

\begin{figure}[htb]
\centering
\includegraphics[scale=0.6]{f6.png}
\caption{$\overline{H_2}$ and $G_5$}
\label{fig:forbidden graph 6}
\end{figure}

\begin{figure}[htb]
\centering
\includegraphics[scale=0.5]{f7.png}
\caption{$\overline{H_3}$ and $G_4$}
\label{fig:forbidden graph 7}
\end{figure}
}

\longdelete{
\begin{figure}[h]
\centering
\includegraphics[scale=0.5]{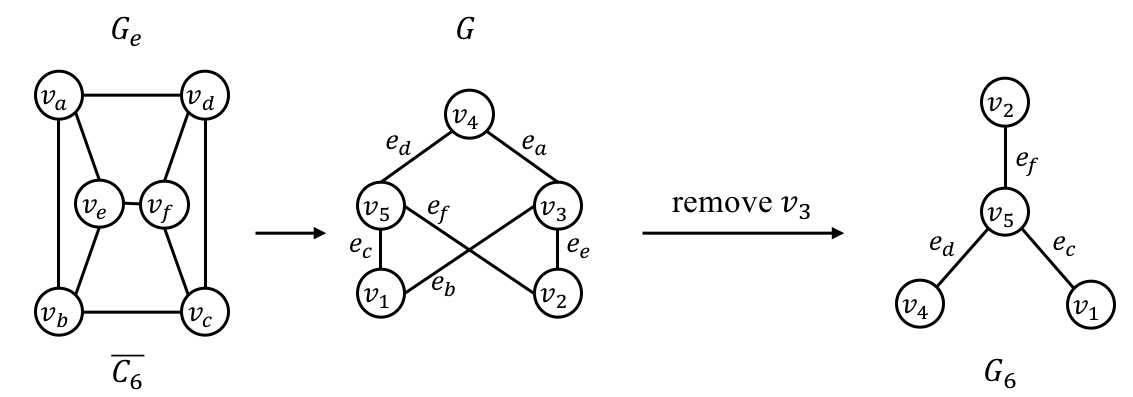}
\caption{$\overline{C_6}$ and $G_6$}
\label{fig:forbidden graph 8}
\end{figure}
}

Based on the above observation, a message conflict graph $G$ is a $(4,4,2)$ graph. Therefore, the procedure of Construct-$(4,4,2)$-representation algorithm can recognize $G$ and construct its intersection representation. 

\begin{lemma}
\label{lemma:1}
Given a $(4,4,2)$ graph, the number of two-pairs needed to be updated, due to an edge insertion, is bounded by a constant. 
\end{lemma}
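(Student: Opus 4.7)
The plan is to show that inserting an edge $(u,v)$ into a $(4,4,2)$ graph $G$ affects the two-pair status of only constantly many vertex pairs, by exploiting the bounded-degree structure of its intersection representation on a tree.

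First, I would characterize the pairs that can be affected. A two-pair $\{x,y\}$ of $G$ ceases to be a two-pair in $H=G+(u,v)$ only if the insertion creates a new chordless path of length at least three between $x$ and $y$; any such path must use the new edge with $u,v$ appearing as consecutive vertices. Conversely, a non-two-pair $\{x,y\}$ of $G$ becomes a two-pair in $H$ only if the insertion introduces a chord that short-circuits every previously chordless path of length greater than two between them; each such old path must contain both $u$ and $v$ as non-consecutive vertices. The pair $\{u,v\}$ itself may also be affected. In every case, the endpoints of an affected two-pair lie on a chordless path that passes through or immediately adjacent to $(u,v)$.

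Second, I would translate chordless paths into the intersection representation. Let $T$ be the host tree with $\Delta(T)\leq 4$ and $\{S_w : w\in V(G)\}$ the family of subtrees, each of maximum degree at most $4$, where $w$ and $w'$ are adjacent iff $|V(S_w)\cap V(S_{w'})|\geq 2$. A chordless path passing through $u$ and $v$ corresponds to a sequence of pairwise two-vertex overlapping subtrees attached to the union $S_u\cup S_v$. Because $\Delta(T)\leq 4$ and $\Delta(S_u),\Delta(S_v)\leq 4$, the number of branching vertices of $S_u\cup S_v$ in $T$ is at most a constant, so the ``attachment sites'' available for neighboring subtrees are finite in number. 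A key sub-claim is that at each such site only constantly many subtrees can simultaneously maintain a two-vertex overlap without inducing a forbidden configuration such as $K_{2,3}$; this gives a uniform bound on the number of subtrees that can interact with $S_u\cup S_v$ in the required way.

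Third, I would bound the length of the relevant chordless paths. The forbidden induced subgraph list for $(4,4,2)$ graphs, namely $K_{2,3}$, $\overline{4P_2}$, $\overline{P_2\cup P_4}$, $\overline{P_6}$, $\overline{H_1}$, $\overline{H_2}$, $\overline{H_3}$, together with weak chordality and the $t=2$ intersection condition, prevent arbitrarily long chordless paths that pass through $(u,v)$ from remaining induced paths; hence every affected path has length bounded by a constant. Combining the constant number of attachment sites, the constant multiplicity at each site, and the constant path length yields a constant bound on the endpoint pairs whose two-pair status can flip, completing the lemma.

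The main obstacle I expect is the local counting in the second step: ruling out that a single branching vertex of $T$ supports unboundedly many subtrees each making a two-vertex overlap with $S_u\cup S_v$. This requires carefully combining the subtree-degree bound of $4$ with the forbidden induced $K_{2,3}$ (since three pairwise disjoint common pairs would embed $K_{2,3}$) to certify that at most a constant number of such subtrees can coexist. Once that local multiplicity is controlled, the overall constant bound follows uniformly, independent of $|V(G)|$.
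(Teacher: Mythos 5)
Your proposal takes essentially the same route as the paper: both arguments pass to the $(4,4,2)$ intersection representation and use the degree-$4$ bounds on the host tree and the subtrees $S_u$, $S_v$ to conclude that only constantly many neighbors of $u$ and $v$ can participate in two-pairs whose status changes. In fact your write-up is more careful than the paper's own proof, which simply asserts that each of $u$ and $v$ has at most four neighbors that can form a two-pair; the local multiplicity sub-claim you flag (ruling out many subtrees overlapping $S_u\cup S_v$ at one attachment site via the forbidden $K_{2,3}$) is exactly the point the paper leaves implicit.
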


\begin{proof}
The constant number of updates is owing to the properties of a two-pair in a $(4,4,2)$-representation. 
Fig.~\ref{fig:Intersection tree graph 1} shows an example of a $(4,4,2)$-representation of a two-pair, i.e. $v_a$ and $v_d$. 
Here the maximum degree of the tree in the right-hand side is four. 
When inserting an edge $(u,v)$ in Algorithm~\ref{alg:weakly_insertion}, one can observe that the number of updates of two-pairs is bounded by the total number of neighbors of $u$ and $v$. More precisely, the update number is bounded by the number of the neighbors that can possibly form a two-pair. As shown in Fig.~\ref{fig:Intersection tree graph 3-2}, due to the maximum degree of 4, $u$ has at most four neighbors each of which can form a two-pair, and so does $v$. Therefore, this completes the proof of the lemma as well as the second main result regarding the number of re-coloring updates.  
\end{proof}

\begin{remark} [Edge deletion]
The proof of Theorem~\ref{thm:complexity} can be amended accordingly for edge deletion with  Algorithms \ref{alg:weakly_deletion} and \ref{alg:chordal_deletion}.
{When an edge is deleted, the intersection representation properties of a weakly chordal graph hold as well if chordality  maintains. 
%% , saying $(4,4,2)$-representation. 
Thus the number of two-pairs to be updated is still bounded by a constant. Moreover, this result can be applied to chordal graphs since chordal graphs are a subclass of weakly chordal graphs.}
\end{remark}

\begin{figure}[htb]
\vspace{-20pt}
\centering
\includegraphics[scale=0.15]{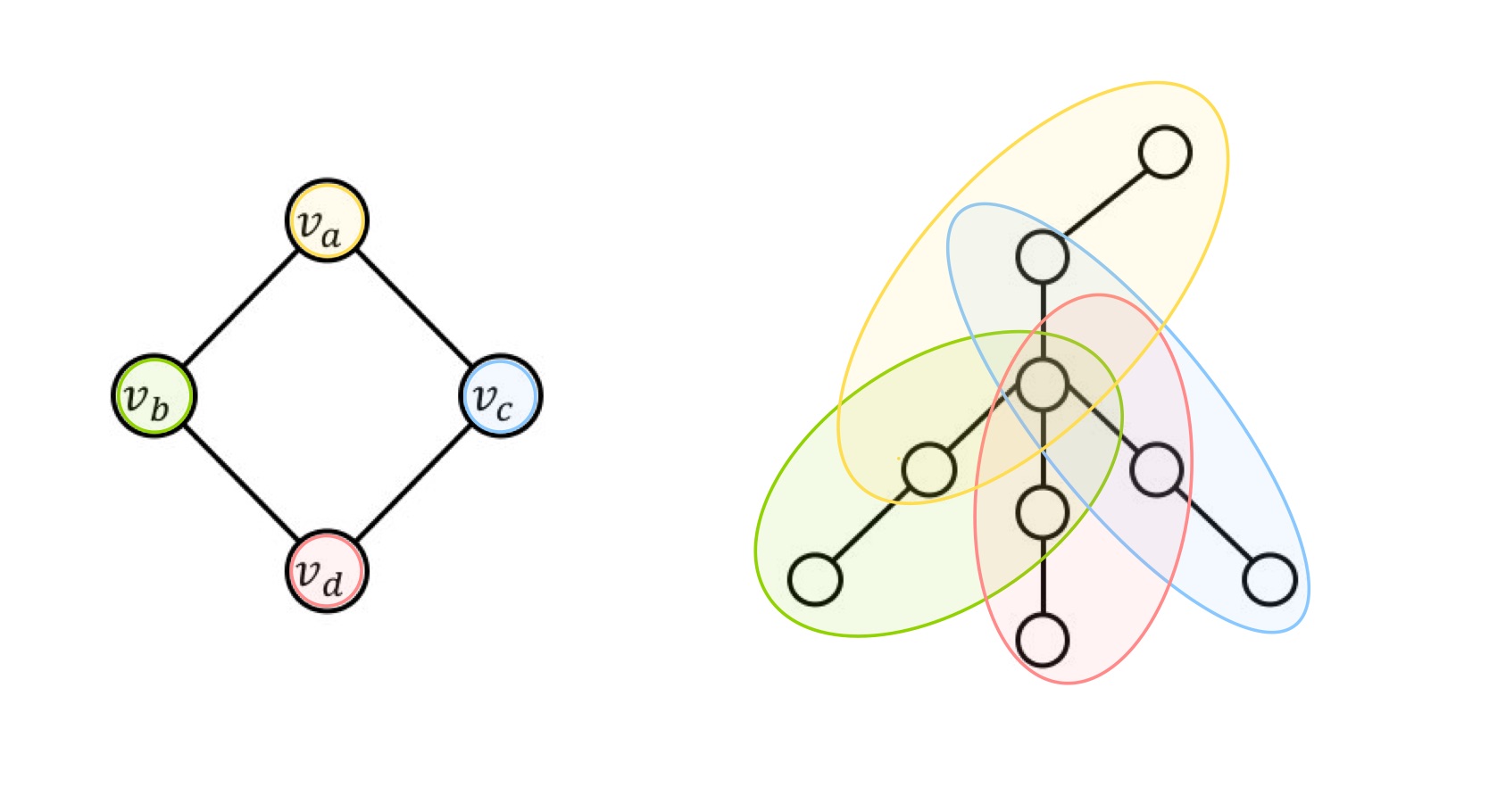}
\vspace{-20pt}
\caption{A two-pair of $v_a$ and $v_d$ and its intersection (tree) representation}
\label{fig:Intersection tree graph 1}
\end{figure}
%
% \longdelete{
% \begin{figure}[h]
% \centering
% \includegraphics[scale=0.15]{}
% \caption{A two-pair of $v_a$ and $v_d$ and its intersection (tree) representation}
% \label{fig:Intersection tree graph 2}
% \end{figure}
% }
%
\begin{figure}[htb]
\centering
\vspace{-20pt}
\includegraphics[scale=0.15]{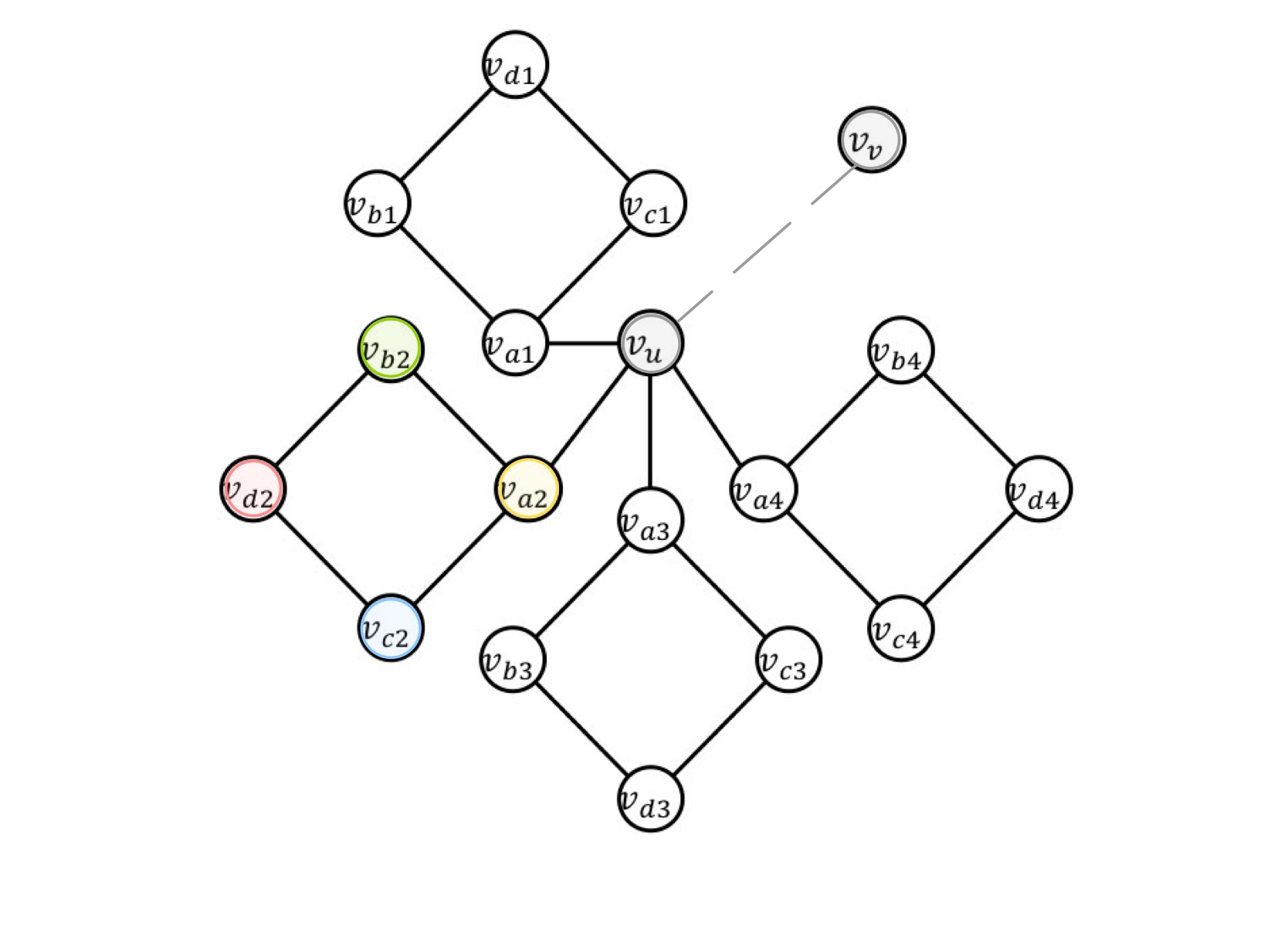}
\vspace{-20pt}
\caption{At most four possible adjacent two-pairs of $v_u$}
\label{fig:Intersection tree graph 3-1}
\end{figure}

\begin{figure}[ht]
\centering
\vspace{-20pt}
\includegraphics[scale=0.15]{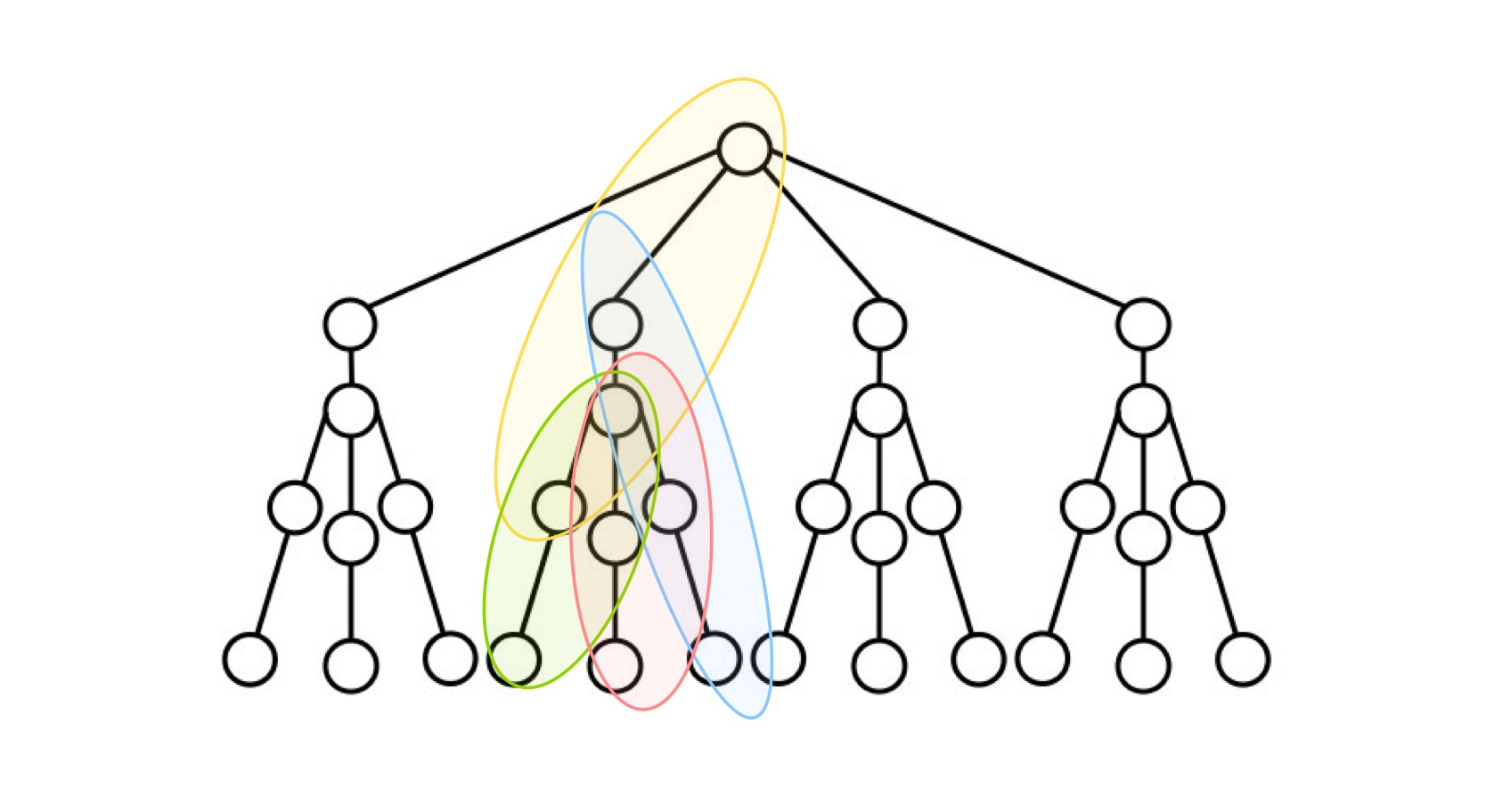}
\vspace{-20pt}
\caption{An illustration of the intersection (tree) representation of Fig.~\ref{fig:Intersection tree graph 3-1}}
\label{fig:Intersection tree graph 3-2}
\end{figure}

\section{Conclusion}
We have studied the topological interference management problem in a dynamic setting from an algorithmic perspective, where an adversary perturbs the chordal network topology that results in edge insertion/deletion on message conflict graphs. By exploiting the structural properties of {message conflict graphs in chordal graph classes}, we have proposed a set of dynamic graph re-coloring algorithms to robustify the optimal coding scheme against such adversarial perturbation. The proposed algorithm makes a constant number of re-coloring updates for each single edge insertion/deletion on conflict graphs, regardless of the sizes of the network or message set, whilst still yielding
the information-theoretic optimality under the TIM setting.

% \section*{Acknowledgment}

% \newpage
% \section*{Acknowledgement}
% We offer our thanks to the MOST grant 110-2634-F-007-018 and National Center for High-performance Computing (NCHC) for their support.

\renewcommand\refname{Reference}
\bibliographystyle{IEEEtran}
\bibliography{reference}

% \newpage

\end{document}